\documentclass[a4paper,12pt]{article}
\usepackage{amsmath,amsthm,amssymb}
\usepackage{graphicx}
\usepackage{booktabs}
\usepackage{hyperref}
\usepackage{enumitem}
\setlist[itemize]{topsep=0pt,partopsep=0pt,itemsep=0pt,parsep=0pt,leftmargin=2em}
\setlist[enumerate]{topsep=0pt,partopsep=0pt,itemsep=0pt,parsep=0pt,leftmargin=2em}
\usepackage[left=2.5cm,right=2.5cm,top=2.5cm,bottom=2.5cm]{geometry}
\usepackage{setspace}
\usepackage{cite}
\usepackage{multirow}
\usepackage{xcolor}
\usepackage{tikz}
\usetikzlibrary{arrows.meta}

\usepackage{titlesec}
\titleformat{\section}{\normalsize\bfseries}{\thesection}{1em}{}
\titlespacing{\section}{0pt}{3ex}{1ex}
\titleformat{\subsection}{\normalsize\bfseries}{\thesubsection}{1em}{}
\titlespacing{\subsection}{0pt}{2.5ex}{0.8ex}
\titleformat{\subsubsection}{\normalsize\bfseries}{\thesubsubsection}{1em}{}
\titlespacing{\subsubsection}{0pt}{2ex}{0.5ex}

\newtheorem{theorem}{Theorem}
\newtheorem{lemma}{Lemma}
\newtheorem{corollary}{Corollary}
\newtheorem{definition}{Definition}
\newtheorem{proposition}{Proposition}
\newtheorem{assumption}{Assumption}

\definecolor{copenblue}{RGB}{0,102,204}
\definecolor{copenlight}{RGB}{173,216,230}
\definecolor{copendark}{RGB}{0,51,102}
\definecolor{copengreen}{RGB}{0,128,0}

\title{\textbf{The Inevitability of Side-Channel Leakage in Encrypted Traffic}\thanks{The Chinese version of this paper has been accepted by \emph{Acta Electronica Sinica}. Please cite the official published version.}}
\author{Guangjie Liu, Guang Cheng, Weiwei Liu}
\date{\today}

\begin{document}
\maketitle

\pagenumbering{Roman}
\setcounter{page}{1}

\begin{abstract}

The widespread adoption of TLS 1.3 and QUIC has rendered payload content invisible, shifting traffic analysis toward reliance on side-channel features. However, rigorous justification for ``why side-channel leakage is inevitable in encrypted communications'' has long been lacking. This paper establishes a strict foundation from information theory and system design by constructing a formal model \(\Sigma=(\Gamma,\Omega)\), where the encrypted communication model \(\Gamma=(A,\Pi,\Phi,N)\) describes the causal chain of ``application generation--protocol encapsulation--encryption transformation--network transmission'', and the observation model \(\Omega\) characterizes external observation capabilities. Based on the composite channel structure, data processing inequality, and stable propagation of bounded Lipschitz statistics, we propose and prove the ``Side-Channel Existence Theorem'': for distinguishable semantic pairs, under the conditions that the system satisfies mapping non-degeneracy (bounded metric expectation \(\mathbb{E}[d(z_P,z_N)\mid X]\le C\)), protocol-layer statistical distinguishability (expectation difference \(\ge\bar\Delta\)), Lipschitz continuity of statistics, observation non-degeneracy (preservation ratio \(\rho>0\)), and the distinguishability propagation condition (\(C<\bar\Delta/2L_\varphi\)), the mutual information \(I(X;Y)\) between observed features and semantic variables is necessarily strictly positive with an explicit lower bound. The corollary demonstrates that in efficiency-prioritized multi-semantic systems, side-channel leakage is inevitable as long as at least one pair of applications is statistically distinguishable. Three key factors jointly determine the leakage boundary: the mapping non-degeneracy constant \(C\) is constrained by efficiency requirements, semantic distinguishability \(\bar\Delta\) stems from application diversity, and observation non-degeneracy \(\rho\) is determined by analyst capabilities. This paper establishes, for the first time, a rigorous information-theoretic foundation for encrypted traffic side channels, providing verifiable predictions for attack feasibility, quantifiable performance benchmarks for defense mechanisms, and mathematical basis for engineering decisions on efficiency-privacy tradeoffs.

\textbf{Keywords}: side channel; information theory; encrypted traffic analysis; existence theorem; efficiency-privacy tradeoff

\end{abstract}

\newpage

\pagenumbering{arabic}
\setcounter{page}{1}

\section{Introduction}

With the widespread deployment of encryption protocols such as TLS 1.3 and QUIC, payload content in modern network communications has been strongly protected by cryptographic means. However, encrypted traffic analysis—which infers sensitive information (such as websites visited by users, applications used, or types of content transmitted) by observing metadata features of encrypted traffic (packet lengths, timing, direction, etc.)—has still achieved remarkable success: in closed environments, website fingerprinting accuracy can reach 91\%-95\%\cite{wang2014effective}, and application identification accuracy exceeds 90\%\cite{shen2021graphdapp}; for encrypted anonymous traffic identification, in real campus gateway environment tests, precision can reach 96\% under medium base rate scenarios, and remain above 93\% even at extremely low base rates of 1000:1\cite{mei2025high}. This phenomenon raises a fundamental theoretical question: why does traffic analysis remain effective even when using computationally secure encryption algorithms such as AES-256-GCM?

From a cryptographic perspective, encryption algorithms ensure the confidentiality of payload content; without the key, attackers cannot recover plaintext from ciphertext. However, encryption protocols inevitably cannot hide communication metadata: source/destination addresses, packet lengths, timestamps, and other information are necessary for network routing and transmission control. This metadata forms the foundation of side channels. The concept of side channels first appeared in the field of cryptographic hardware implementation. Kocher (1996) proposed timing attacks that recover keys by measuring the execution time of encryption operations\cite{kocher1996timing}, and Kocher et al. (1999) introduced Differential Power Analysis (DPA) that extracts key information using power consumption variations of encryption devices\cite{kocher1999differential}. The common characteristic of these attacks is: exploiting physical byproducts (time, power consumption, electromagnetic radiation, etc.) during the implementation of cryptographic algorithms, rather than directly breaking the algorithms themselves. Subsequently, the side-channel concept expanded to the field of network communications. Hintz's (2002) work demonstrated how to identify websites visited by users by analyzing traffic patterns of encrypted web pages\cite{hintz2002fingerprinting}, promoting in-depth research on network traffic side channels. Network traffic side channels are essentially consistent with hardware side channels—encryption protects payload content, but observable features of the communication process itself (packet size, timing, direction, etc.) become a new source of information leakage. This paper focuses on network traffic side channels but inherits the core insight from hardware side-channel research: side channels are not defects in encryption algorithms, but inevitable byproducts of efficiency-prioritized implementation and deployment processes.

The root cause of network traffic side-channel leakage lies in the ``fingerprint propagation'' of application-layer behavioral patterns during transmission. Specifically: size differences in application objects map to length patterns of encrypted record sequences; the timing logic of HTTP request-response is reflected in packet arrival interval distributions; the directionality and burst characteristics of client-server interactions reflect protocol semantics; layer-by-layer encapsulation in the protocol stack amplifies subtle differences (TCP segmentation boundaries, congestion control window adjustments). These mechanisms make traffic from different applications statistically distinguishable—video streams show periodic bursts, instant messaging manifests as bidirectional interaction with small packets, and web browsing displays short-term high-density resource loading. Although encryption operations change payload content, they cannot eliminate these statistical fingerprints originating from application logic.

However, merely pointing out ``fingerprint propagation'' does not answer why side channels are inevitable. The deeper questions are: under given system constraints (limited computational resources, communication efficiency requirements, protocol compatibility needs), why does encryption system design necessarily lead to the existence of side channels? Does this existence have a quantifiable theoretical lower bound? Can the tradeoff relationship between efficiency and privacy be rigorously characterized mathematically?

In recent years, despite continuous innovation in attack methods for encrypted traffic analysis and significant performance improvements\cite{lin2022etbert,shen2023machine}, systematic justification for the theoretical inevitability of side-channel existence remains lacking. Li et al. (2018) quantified website fingerprinting leakage using mutual information \(I(X;Y)\), finding in experiments targeting the Tor network that even though Tor encrypts communication content and hides endpoint identities through three-layer relays, metadata features still leak website information: information leakage \(I(F;W)\) from individual features reaches up to 3.45 bits, and combined features can reach approximately 6.6 bits\cite{li2018measuring}. This non-zero mutual information stems from Tor not obfuscating or padding traffic metadata to ensure low latency and low overhead. Cai et al. (2014) proved that any defense achieving \(\varepsilon\)-security necessarily incurs a computable bandwidth overhead lower bound in a closed world of \(n\) websites\cite{cai2014systematic}, revealing the theoretical cost of defense, but this lower bound only targets specific scenarios and lacks generality. The differential privacy framework provides quantifiable privacy guarantees for traffic analysis defense\cite{dwork2006distributed,sabzi2024netshaper}, but its design philosophy seeks a balance between privacy and utility, rather than exploring the fundamental reasons for information leakage under efficiency constraints.

This paper provides a strict foundational reasoning for side-channel existence from information theory and system design. We first construct a formal model \(\Sigma=(\Gamma,\Omega)\), where the encrypted communication system \(\Gamma=(A,\Pi,\Phi,N)\) characterizes the processes of application generation, protocol encapsulation, encryption transformation, and network transmission, and the observation model \(\Omega\) delimits the layers and positions of external observation, formalizing ``generation--encapsulation--encryption--transmission--observation'' as a causally measurable composite channel \(X\to\Xi_A\to\Xi_P\to\Xi_C\to\Xi_N\to Y\). Within this framework, we externalize the constraints of efficiency-prioritized design as mapping non-degeneracy: there exists a metric \(d\) and constant \(C<\infty\) such that the trajectory mapping from protocol layer to network layer maintains bounded deviation in the metric sense \(\mathbb{E}[d(z_P,z_N)\mid X]\le C\). This metric captures the joint statistical structure of multiple dimensions such as length, timing, and direction of point processes, and the existence of bounded deviation \(C\) stems from practical requirements such as bandwidth, latency, and throughput. Combined with the stable propagation property of bounded Lipschitz statistics, we establish a strict derivation chain from protocol-layer distinguishability to observation-layer information leakage existence.

The core conclusion of this paper is the ``Side-Channel Existence Theorem'': for distinguishable semantic pairs, when the following conditions are satisfied—the system mapping maintains non-degeneracy (metric expectation bound \(\mathbb{E}[d(z_P,z_N)\mid X]\le C\)), protocol-layer statistical distinguishability exists (expectation difference \(\ge\bar\Delta\)), statistics satisfy Lipschitz continuity, the observation model maintains non-degeneracy (preservation ratio \(\rho>0\)), and the distinguishability propagation condition (\(C<\bar\Delta/2L_\varphi\))—the mutual information between observed features and semantic variables satisfies the explicit lower bound
\begin{equation}
I(X;Y)\ge\frac{1}{2\ln 2}\left(\frac{\rho[\bar\Delta-2L_\varphi C]}{2}\right)^2>0.
\end{equation}
This lower bound reveals the inevitability of side-channel leakage: the mapping non-degeneracy constant \(C\) is limited by efficiency constraints, semantic distinguishability \(\bar\Delta\) originates from application diversity, and observation non-degeneracy \(\rho\) is determined by analyst capabilities—these three factors jointly constitute an insurmountable leakage boundary. The corollary shows that in efficiency-prioritized multi-semantic systems, as long as at least one pair of applications is statistically distinguishable, side-channel leakage is necessarily positive.

Therefore, side channels are not incidental flaws in any protocol implementation, but inherent properties of network communication systems satisfying practical constraints. The correct engineering objective is not to pursue unattainable zero leakage, but rather a constrained optimization problem that minimizes leakage under given efficiency constraints.

This paper is organized as follows: Section 2 reviews related theoretical work; Section 3 constructs a formal model for side-channel analysis, including definitions of encrypted communication systems, observation models, and key properties; Section 4 proposes and proves the side-channel leakage existence theorem, establishing a derivation chain from expectation difference to mutual information; Section 5 discusses theoretical implications, analyzes the transformation from information-theoretic lower bounds to actual attack performance, the fundamentality of efficiency-privacy tradeoffs, and theoretical boundaries of defense; Section 6 concludes the paper, introduces several open problems regarding the practical implementation of the theoretical framework, and provides research outlook.

\section{Related Theoretical Work}

The theoretical foundation for side-channel analysis in networks can be traced back to early research in information theory and anonymity metrics. Chaum (1981) proposed mix networks, laying the foundation for anonymous communication\cite{chaum1981untraceable}, but rigorous mathematical frameworks for quantifying anonymity were long lacking. Díaz et al. (2002) first proposed measuring anonymity using Shannon entropy \(H(X) = -\sum p_i \log p_i\), introducing normalized entropy \(d = H(X)/H_M\) to distinguish the effects of anonymity set size and probability distribution\cite{diaz2002measuring}. The key insight of this work was: probability distribution is more important than set size—uniform distribution among 10 people (\(d=1\)) provides stronger anonymity than 100 people where one person has 90\% identification probability (\(d\approx0.47\)). Subsequent work introduced richer entropy metric tools: Deng et al. (2006) applied Rényi entropy \(H_\alpha(X) = \frac{1}{1-\alpha} \log \sum p_i^\alpha\) to anonymity metrics\cite{deng2006measuring}, where different \(\alpha\) values capture different aspects of the distribution (\(\alpha \to 1\) reduces to Shannon entropy). Serjantov and Danezis (2002) introduced mutual information \(I(X;Y)\) into anonymity protocol analysis\cite{serjantov2002information}. Chatzikokolakis et al. (2007) further interpreted anonymity protocols as noisy channels, characterizing leakage upper bounds with channel capacity \(C = \max I(X;Y)\), and introduced relative entropy \(D(P \| Q) = \sum p_i \log(p_i/q_i)\) to measure changes in distribution before and after attacks\cite{chatzikokolakis2007anonymity}. These early works established the theoretical framework for quantifying privacy leakage using information theory, but mainly focused on anonymous communication protocols (such as mix networks and onion routing), without systematically analyzing the relationship between encryption protocols themselves and side channels.

Research on statistical leakage attacks revealed threats from long-term observation. Kesdogan et al. (2002) proved that in open environments, as long as users have habitual communication patterns, long-term observation inevitably leads to anonymity degradation\cite{kesdogan2002limits}. Mathematically, this is equivalent to conditional entropy \(H(X|Y_1, Y_2, \ldots, Y_t)\) decreasing toward near-zero with the number of observations \(t\), revealing the law of exponential decay of anonymity over time. Danezis (2003) proposed statistical disclosure attacks, identifying sender-receiver pairs in anonymous systems through traffic analysis\cite{danezis2003statistical}. These studies revealed the impact of temporal correlations but did not address the causal relationship between encryption protocol design and side channels—namely, why fundamental constraints in protocol design lead to side-channel existence.

The differential privacy framework provides quantifiable guarantees for privacy protection. Dwork et al. (2006) proposed differential privacy: mechanism \(\mathcal{M}\) satisfies \((\varepsilon,\delta)\)-DP if for adjacent datasets \(D, D'\) and any output set \(S\), \(\Pr[\mathcal{M}(D) \in S] \leq e^{\varepsilon} \cdot \Pr[\mathcal{M}(D') \in S] + \delta\)\cite{dwork2006distributed}. The advantage of this framework lies in providing precise privacy parameters and composability: \(k\) \((\varepsilon,\delta)\)-DP operations provide \((k\varepsilon, k\delta)\)-DP guarantee under basic composition. Vuvuzela (2015) and Stadium (2017) first applied differential privacy to large-scale messaging system metadata protection\cite{vandenhooff2015vuvuzela,tyagi2017stadium}. NetShaper (2024) first established a formal differential privacy framework for network side-channel defense\cite{sabzi2024netshaper}. The key theoretical contribution of this work is connecting abstract \(\varepsilon\) privacy parameters with concrete system performance metrics (bandwidth, latency) through computable mathematical relationships. Specifically, given a privacy budget \(\varepsilon\) and network conditions, NetShaper can compute the minimum bandwidth overhead and latency increment required to achieve that privacy guarantee. This framework transforms the qualitative tradeoff of ``privacy vs. performance'' into an optimizable mathematical problem. However, this approach still assumes ``a certain degree of leakage is acceptable'' (encoded through \(\varepsilon>0\)), rather than proving the inevitability of side channels as byproducts of functional implementation from a strict mathematical perspective.

Research on website fingerprinting attacks and provable defenses provides empirical foundations and theoretical attempts for understanding side channels. Early website fingerprinting attacks from Hintz (2002) on traffic analysis of encrypted web pages\cite{hintz2002fingerprinting} to Panchenko et al. (2011) using support vector machines\cite{panchenko2011website} gradually demonstrated the effectiveness of side channels. Dyer et al. (2012) argued that most ``efficient'' traffic shaping schemes still fail due to observable side channels, proposing the BuFLO baseline defense with fixed rate, constant-length packets, and minimum duration strategies\cite{dyer2012peek}. Cai et al. (2014) established an attack-agnostic evaluation framework and bandwidth lower bound, proposing the Tamaraw defense based on this: separate rate fixing for upstream and downstream with block-based padding, providing provable upper bounds on attack accuracy\cite{cai2014systematic}. Wang and Goldberg (2017) proposed the Walkie-Talkie defense, using half-duplex burst shaping and pairwise obfuscation mechanisms, with measured average bandwidth overhead of about 31\% and latency increase of about 34\%\cite{wang2017walkie}. Huang et al. (2025) proposed asymmetric defense (STAP) reducing attack accuracy to 48.3\% with only 18\% bandwidth overhead\cite{huang2025stap}. Wright et al. (2009) formalized traffic shaping as a convex optimization problem: given source distribution \(X\) and target distribution \(Y\), find transformation matrix \(A \geq 0\) such that \(AX = Y\) (where \(\sum_i A_{ij} = 1\) ensures each column is a probability distribution), while minimizing expected bandwidth overhead \(\sum_{i,j} x_j a_{ij} |s_i - s_j|\), where \(s_i, s_j\) are packet sizes\cite{wright2009traffic}. These works revealed the theoretical costs of defense but only targeted the specific scenario of website fingerprinting, did not establish general efficiency-leakage relationships, and did not explain why side channels necessarily exist in the absence of defense.

Quantification of mutual information leakage provided information-theoretic tools for side-channel analysis. Li et al. (2018) first systematically applied mutual information \(I(X;Y)\) to quantify website fingerprinting leakage at ACM CCS\cite{li2018measuring}. In experiments on 100 websites in a closed Tor network environment, they found that even though Tor uses encryption and obfuscation techniques, traffic features still leak substantial website information. Information leakage \(I(F;W)\) from individual features reaches up to 3.45 bits (from the rounded outbound packet count feature), 54.55\% of features leak less than 1 bit, while combined multiple features achieve total information leakage of approximately 6.6 bits, approaching the theoretical limit \(\log_2 100 \approx 6.64\) bits. Cherubin (2017) proposed \((\xi, \Phi)\)-privacy metrics based on Bayes error lower bounds, defining defense security from an information-theoretic perspective\cite{cherubin2017bayes}. These works revealed the phenomenon of information leakage persisting after encryption but did not explain its inevitability from a system design perspective—namely, why \(I(X;Y)\) is necessarily greater than zero in encryption systems satisfying practical conditions?

Unlike this direct approach of measuring leakage using mutual information or Bayes error lower bounds, Fu et al. and subsequent works\cite{fu2021realtime,fu2024detecting,fu2023detecting} model encrypted traffic features as random variables or signals, performing differential entropy, information loss, KL divergence, and separability/robustness analysis in frequency-domain spectral features, length patterns, or flow interaction graph spaces, focusing on evaluating the effectiveness of given features and detection methods in specific tasks such as malicious traffic detection and tunnel traffic identification. In contrast, this paper adopts a system-level perspective of composite channel–mutual information lower bounds, discussing whether encrypted traffic side-channel leakage is ``inevitable'' and its theoretical lower bound without pre-fixing feature forms.

Formal security analysis of anonymous systems attempts to prove privacy guarantees in more rigorous frameworks. Camenisch and Lysyanskaya (2005) formalized onion routing in the Universal Composability (UC) framework, providing composable security definitions\cite{camenisch2005formal}. Feigenbaum et al. (2007) modeled onion routing using probabilistic I/O automata, providing formal anonymity analysis against active timing attacks\cite{feigenbaum2007model}. Danezis and Goldberg's (2009) Sphinx model provides compact message formats for mix networks, proving unlinkability and path length hiding under the random oracle model\cite{danezis2009sphinx}. These formal methods provide rigorous security proofs under specific threat models but mainly target security analysis of specific anonymity protocols, without establishing a general theoretical framework from an information-theoretic perspective for the inevitability of side-channel leakage in encrypted communications.

Despite significant progress in respective fields, obvious gaps remain in the theoretical foundation of side-channel existence: (1) Lack of formal causal framework—how to rigorously model the information propagation process from application semantics to observable features? How to mathematically characterize the layer-by-layer effects of encryption, encapsulation, and transmission? (2) Intrinsic connection between efficiency constraints and leakage unclear—how does efficiency prioritization in system design necessarily lead to side channels? Under what verifiable conditions is leakage inevitable? (3) Lack of computable leakage boundaries—given system parameters, what is the lower bound of side-channel leakage? How to predict actual attack performance from theoretical lower bounds? This paper systematically fills these theoretical gaps by constructing formal models, proving existence theorems, and establishing explicit lower bounds, providing a rigorous mathematical foundation for understanding the fundamental causes of side channels.

\section{Formal Model for Side-Channel Analysis}

\subsection{Basic Definitions and Modeling Conventions}

The core approach of this paper is to abstract the entire process of ``generation, encapsulation, encryption, transmission, observation'' as a composite channel consisting of measurable mappings and random channels, making the mutual information between semantic variables and observable features strictly definable and characterizable by the data processing inequality. We first introduce the basic definitions and conventions of the modeling.

\textbf{Time and Randomness:} Time is modeled as the continuous non-negative axis \(\mathbb{T}=\mathbb{R}_{\ge 0}\). All random variables (semantic \(X\), application-side randomness \(U_A\), protocol-side \(U_\Pi\), encryption-side \(U_\Phi\), network-side \(U_N\), observation-side \(U_\Theta\)) are defined on a common probability space \((\Omega_0,\mathcal{F}_0,\mathbb{P})\); the specific construction of this space does not affect subsequent analysis and is used to ensure the well-definedness of joint distributions and conditional expectations.

\textbf{Point Processes and Sequence Representations:} Point processes \(\Xi_A,\Xi_P,\Xi_C,\Xi_N\) (for intuition, subsequently referred to as ``message sequences, plaintext packet sequences, ciphertext packet sequences, arrival packet sequences'') and observed features \(Y\) have sample paths taking values in corresponding measurable spaces.

\textbf{Unified Trajectory Space and Metric:} To avoid inconsistent domains for cross-layer statistics, we introduce a unified marked point process trajectory space \(\mathcal{Z}\) (e.g., using finite marked counting measure space, or embedding window-within trajectories into Skorokhod space). Let \(e_P:\mathrm{range}(\Xi_P)\to\mathcal{Z}\), \(e_N:\mathrm{range}(\Xi_N)\to\mathcal{Z}\) be measurable embeddings from each layer to \(\mathcal{Z}\). All window-level statistics below are viewed as measurable functions \(\varphi:\mathcal{Z}\to[-M,M]\). The metric \(d\) is defined on \(\mathcal{Z}\), simultaneously measuring timing and marking (length, direction) differences, and is compatible with the Lipschitz conditions below. To avoid unnecessary technical burden, we assume \(\mathcal{X}\), \(\mathcal{Y}\), \(\mathrm{range}(\Xi_A)\), \(\mathrm{range}(\Xi_P)\), \(\mathrm{range}(\Xi_C)\), \(\mathrm{range}(\Xi_N)\), and \(\mathcal{Z}\) are all standard Borel spaces.

To clearly distinguish ``system design'' from ``external observation'', we provide the following definition of side-channel analysis model:

\begin{definition}[Side-Channel Analysis Model]\label{def:sigma} 
The side-channel analysis model is denoted as \(\Sigma=(\Gamma,\Omega)\), where the encrypted communication model is denoted as \(\Gamma=(A,\Pi,\Phi,N)\), consisting of four causal operators: application generation \(A\), protocol encapsulation \(\Pi\), encryption transformation \(\Phi\), network transmission \(N\); \(\Omega\) is the observation model, characterizing the side-channel analyst's passive monitoring and feature extraction capabilities at various network layers and positions.
\end{definition}

Under this definition, system \(\Gamma\) only determines ``how to generate and transmit ciphertext'', and observation model \(\Omega\) only determines ``what the analyst can see''. The leakage amount \(L(\Gamma,\Omega)\) is the result of the composite action of both. This separation allows the existence theorem below to hold under the weakest system and observation assumptions.

\subsection{Encrypted Communication Model}

The encrypted communication model \(\Gamma=(A,\Pi,\Phi,N)\) consists of four causal operators that act on message point processes and packet sequences on the time axis. We characterize these transformations as causally measurable mappings, only requiring them to maintain temporal causality and measurability, without making strong assumptions about the specific forms of probability distributions (such as independence or stationarity).

\textbf{(1) Application Layer (Operator \(A\))}

Let \(\mathcal{X}\) be the application semantic space (such as website ID, application category, video content, etc.), and let \(X\in\mathcal{X}\) be the semantic variable. The application generates a message sequence on the time axis:
\begin{equation}
\Xi_A=\{(\tau_k,m_k)\}_{k\ge 1},\quad \tau_k\in\mathbb{T},\ m_k\in\mathcal{M},
\end{equation}
which is jointly determined by \(X\) and exogenous noise, and can be written as a causally measurable mapping \(\Xi_A=\mathcal{G}_A(X,U_A)\). Here \(\mathcal{M}\) is the message set, and \(U_A\) is application-side randomness (user behavior, business logic jitter, etc.). We do not require stationarity or independence, only that \(\mathcal{G}_A\) is causal and measurable.

\textbf{(2) Protocol Layer (Operator \(\Pi\))}

The protocol stack maps the message sequence \(\Xi_A\) to a segmented and encapsulated plaintext packet sequence:
\begin{equation}
\Xi_P=\{(t_i,\ell_i,\mathrm{dir}_i,h_i,b_i)\}_{i\ge 1},
\end{equation}
where \(t_i\in\mathbb{T}\) is the sending time, \(\ell_i=|h_i|+|b_i|\) is the length, \(\mathrm{dir}_i\in\{\uparrow,\downarrow\}\) is the direction, \(h_i\) is the concatenation of headers from various layers, and \(b_i\) is the plaintext payload. Expressed as a causal mapping \(\Xi_P=\Pi(\Xi_A,U_\Pi)\), where \(U_\Pi\) represents protocol-side randomness (Nagle, segmentation boundaries, congestion control adaptation, etc.). \(\Pi\) maintains causality: \(t_i\) depends only on \(\{(\tau_k,m_k):\tau_k\le t_i\}\) and past protocol state.

\textbf{(3) Encryption Layer (Operator \(\Phi\))}

Encryption maps the plaintext packet sequence \(\Xi_P\) to a ciphertext packet sequence:
\begin{equation}
\Xi_C=\{(t_i',\ell_i',\mathrm{dir}_i,h_i',c_i)\}_{i\ge 1}=\Phi(\Xi_P,U_\Phi),
\end{equation}
where \(h_i'\) are visible or semi-visible header fields, and \(c_i\) is the ciphertext payload. We only use two general properties: (i) \emph{Semantic independence}: under the ideal cipher assumption, given input length and public parameters, \(c_i\) is conditionally independent of plaintext content; (ii) \emph{Determinism of length transformation}: there exists a deterministic function \(g_{\mathrm{len}}\) such that \(\ell_i'=g_{\mathrm{len}}(\ell_i;\theta_\Phi)\), where \(\theta_\Phi\) contains block size, record overhead, optional padding, etc. Real-world protocols (TLS 1.3, QUIC) satisfy this intuitive property: content is scrambled, but length and timing are only affected by minor alignment and record formatting. Time mapping maintains causality: \(t_i'\ge t_i\), and \(t_i'-t_i\) is determined by implementation and scheduling.

\textbf{(4) Network Layer (Operator \(N\))}

The network transmits the sender-side ciphertext sequence \(\Xi_C\) as an arrival packet sequence on the observation path:
\begin{equation}
\Xi_N=\{(\tilde t_j,\tilde \ell_j,\mathrm{dir}_j,\tilde h_j,\tilde c_j)\}_{j\ge 1}=N(\Xi_C,U_N),
\end{equation}
where \(U_N\) characterizes uncertainties such as queuing, routing, packet loss, retransmission, reordering, framing, and multiplexing. \(N\) is viewed as a causal channel of a random time-varying queuing network; independence or Markov assumptions are not required.

A fundamental assumption about \(\Gamma\) is that it produces statistically distinguishable traffic patterns at the protocol layer. To this end, we provide the following definition of semantic distinguishability.

\begin{definition}[Semantic Distinguishability]\label{def:semantic-distinguishability}
Given window \(T>0\), the system \(\Gamma\) is said to have \(\bar\Delta\)-distinguishability in the \(\mathcal{Z}\) representation induced at the protocol layer if there exist a semantic pair \(x\neq x'\in\mathcal{X}\) and a bounded measurable statistic \(\varphi:\mathcal{Z}\to[-M,M]\) such that
\begin{equation}
\big|\mathbb{E}[\varphi(e_P(\Xi_P|_{[0,T]}))\mid X=x]-\mathbb{E}[\varphi(e_P(\Xi_P|_{[0,T]}))\mid X=x']\big|\ \ge\ \bar\Delta.
\end{equation}
The system \(\Gamma\) is said to be distinguishable at the protocol layer if there exists \(\bar\Delta>0\) such that it has \(\bar\Delta\)-distinguishability.
\end{definition}

This definition characterizes an intrinsic property of the system: different semantics, after application generation and protocol encapsulation, produce sufficiently large differences in the conditional expectations of some statistic in the unified trajectory space \(\mathcal{Z}\). The existence of \(\bar\Delta\) excludes the degenerate case of ``arbitrarily small expectation differences'', ensuring that distinguishability is operationally meaningful in a statistical sense. Common statistics \(\varphi\) include: total bytes in window, ratio of upstream to downstream packets, weighted average of packet intervals, etc. Different application semantics often exhibit differences in these statistics: high byte counts for video streams, balanced upstream/downstream ratios for instant messaging, short-term high-density patterns for web browsing—these patterns maintain statistical separability even after protocol encapsulation.

Real-world encrypted communication system design follows the efficiency-first principle: maximizing resource utilization efficiency to support upper-layer business needs while ensuring cryptographic security. Business has quantifiable requirements for multiple performance dimensions:
\begin{itemize}[topsep=3pt,itemsep=2pt]
\item Bandwidth overhead: additional bytes introduced by the protocol should be controlled within an acceptable range (e.g., TLS 1.3 approximately 5\%)
\item End-to-end latency: delays introduced by encryption, framing, and scheduling should be below application perception thresholds (e.g., web pages \(<\)200ms, VoIP \(<\)150ms)
\item Throughput: effective throughput should not significantly decrease due to padding or artificial delays
\item Protocol compatibility: must be compatible with existing network infrastructure (MTU, congestion control, NAT traversal, etc.)
\end{itemize}

The direct consequence of efficiency-first design is: the mapping from plaintext protocol layer \(\Xi_P\) to arrival packet sequence \(\Xi_N\) necessarily maintains non-degeneracy.

\begin{definition}[Mapping Non-Degeneracy]\label{def:non-degeneracy}
On a fixed window \([0,T]\), the encrypted communication system \(\Gamma\) is said to have window-level mapping non-degeneracy if there exists a constant \(C_T<\infty\) such that for all \(x\in\mathcal{X}\),
\begin{equation}
\mathbb{E}\Big[d\big(e_P(\Xi_P|_{[0,T]}),\,e_N(\Xi_N|_{[0,T]})\big)\,\Big|\,X=x\Big]\ \le\ C_T.
\end{equation}
Intuitively, this requires that encryption and transmission mappings do not excessively distort the statistical structure of the plaintext packet sequence—the joint distribution of marks (length, direction) and timing of the point process remains ``close'' before and after mapping.
\end{definition}

Non-degeneracy directly stems from efficiency constraints, such as excessive padding causing the metric to diverge in the length component (bandwidth overhead exceeding limits), excessive delays or jitter causing the metric to diverge in the timing component (end-to-end latency exceeding limits, breaking real-time properties), or completely scrambling the marks and timing structure of the packet sequence causing the metric to diverge (protocol semantics broken, business unusable). ``Window-level'' means the metric acts on point process segments within a finite time window \([0,T]\) (e.g., 10 seconds or one session), allowing single-packet perturbations (e.g., TCP retransmissions, reordering) but constraining cumulative deviation. The specific form of metric \(d\) can be varied, as long as it can capture the joint statistical structure of length, timing, and direction of point processes. The conclusions of this paper do not depend on specific metric choices; the establishment of subsequent theorems relies only on abstract non-degeneracy—namely, the existence of some suitable metric \(d\) such that \(\mathbb{E}[d]\le C_T<\infty\). In typical network scenarios, a commonly used choice is to define \(d\) as a weighted sum of ``length deviation + packet count deviation + latency deviation'', or implicit distances in some high-dimensional space of flow features characterized by machine learning and deep learning. In this case, \(C_T\) can be understood as the maximum acceptable deviation jointly defined by bandwidth, packet count overhead, and end-to-end latency jitter in that window.

The above semantic distinguishability and mapping non-degeneracy jointly constitute intrinsic properties of encrypted communication systems. They characterize ``the statistical variability that the system necessarily retains under the premise of business availability''. The next section will discuss how side-channel analysts capture these properties through observation model \(\Omega\), ultimately deriving the existence of side-channel leakage.

\subsection{Observation Model}

The observation model \(\Omega\) characterizes the side-channel analyst's passive access capabilities to various network layers and positions, as well as the feature extraction process from arrival packet sequences. Although observers cannot directly access the plaintext protocol layer \(\Xi_P\), they can infer semantic information by observing arrival packet sequences \(\Xi_N\) and extracting features. This section defines the composition of the observation model, the generation of observed features, and non-degeneracy conditions ensuring that the observation channel does not degenerate to a constant.

\begin{definition}[Observation Model]\label{def:observer}
The side-channel analysis model \(\Omega=(\mathcal{L}_{\mathrm{acc}},\mathcal{O}_{\mathrm{acc}},\Theta)\) consists of three components:
\begin{itemize}[topsep=3pt,itemsep=2pt]
\item \(\mathcal{L}_{\mathrm{acc}}\): Accessible layer set (such as IP layer, UDP/TCP layer, QUIC/TLS record layer, etc.)
\item \(\mathcal{O}_{\mathrm{acc}}\): Observable position set (links, switch ports, host network interfaces, etc.)
\item \(\Theta\): Causally measurable feature extraction operator
\end{itemize}
Given arrival packet sequence \(\Xi_N\), the observer obtains feature sequence through \(\Theta\):
\begin{equation}
Y=\Theta(\Xi_N;\mathcal{L}_{\mathrm{acc}},\mathcal{O}_{\mathrm{acc}},U_\Theta),
\end{equation}
where \(U_\Theta\) represents observation-side uncertainty (timestamp precision, sampling strategy, counting granularity, etc.).
\end{definition}

The feature extraction operator \(\Theta\) typically includes the following operations: extracting packet length sequences, computing packet arrival time intervals, counting upstream/downstream packet ratios, constructing burst pattern descriptors, computing window-level statistics (total bytes, packet counts, rates, etc.). The specific form of \(\Theta\) depends on the observer's technical capabilities and analysis objectives but must maintain causality: observed features at time \(t\) depend only on arrival packets at or before time \(t\).

Connecting system model \(\Gamma\) and observation model \(\Omega\) end-to-end yields the complete causal chain from semantics to observed features, as shown in Figure 1:
\begin{equation}
X \xrightarrow{\ \mathcal{G}_A\ } \Xi_A \xrightarrow{\ \Pi\ } \Xi_P \xrightarrow{\ \Phi\ } \Xi_C \xrightarrow{\ N\ } \Xi_N \xrightarrow{\ \Theta\ } Y.
\end{equation}
where \(X\) is the sensitive semantic variable, \(\Xi_A,\Xi_P,\Xi_C,\Xi_N\) are message sequence, plaintext packet sequence, ciphertext packet sequence, arrival packet sequence (all point processes), and \(Y\) is the observed feature sequence. This causal chain induces two key properties:

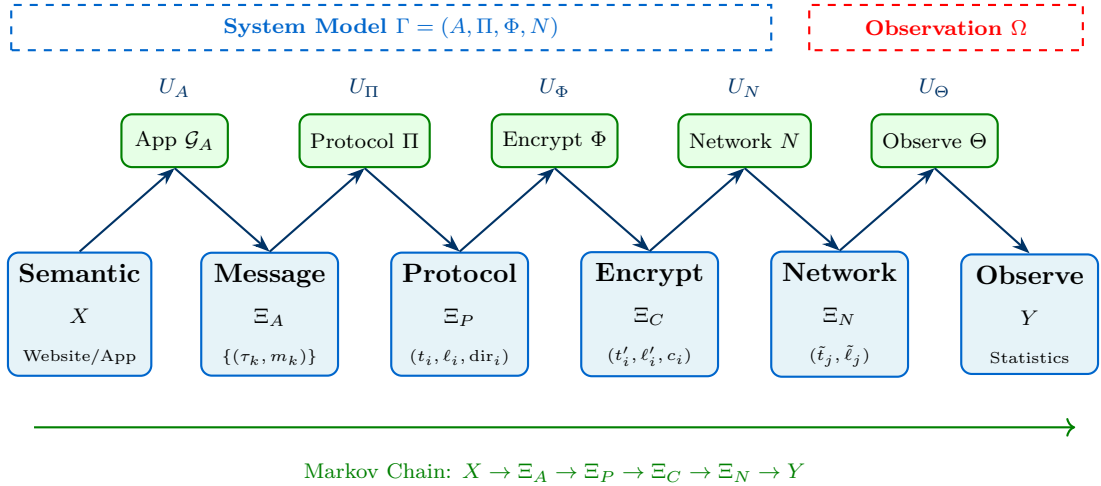
\begin{figure}[htbp]
\centering
\begin{tikzpicture}[
    layer/.style={rectangle, rounded corners, draw=copenblue, thick, 
                  minimum height=1.5cm, minimum width=1.8cm, align=center, fill=copenlight!30},
    process/.style={rectangle, rounded corners, draw=copengreen, thick,
                    minimum height=0.7cm, minimum width=1.4cm, align=center, fill=green!10},
    arrow/.style={-Stealth, thick, copendark}
]
\node[layer] (X) at (0,0) {\footnotesize\textbf{Semantic}\\[1pt]\scriptsize \(X\)\\[0pt]{\tiny Website/App}};

\node[layer] (XiA) at (2.5,0) {\footnotesize\textbf{Message}\\[1pt]\scriptsize \(\Xi_A\)\\[0pt]{\tiny \(\{(\tau_k,m_k)\}\)}};

\node[layer] (XiP) at (5,0) {\footnotesize\textbf{Protocol}\\[1pt]\scriptsize \(\Xi_P\)\\[0pt]{\tiny \((t_i,\ell_i,\mathrm{dir}_i)\)}};

\node[layer] (XiC) at (7.5,0) {\footnotesize\textbf{Encrypt}\\[1pt]\scriptsize \(\Xi_C\)\\[0pt]{\tiny \((t_i',\ell_i',c_i)\)}};

\node[layer] (XiN) at (10,0) {\footnotesize\textbf{Network}\\[1pt]\scriptsize \(\Xi_N\)\\[0pt]{\tiny \((\tilde{t}_j,\tilde{\ell}_j)\)}};

\node[layer] (Y) at (12.5,0) {\footnotesize\textbf{Observe}\\[1pt]\scriptsize \(Y\)\\[0pt]{\tiny Statistics}};

\node[process] (A) at (1.25,2.3) {\scriptsize App \(\mathcal{G}_A\)};
\node[process] (Pi) at (3.75,2.3) {\scriptsize Protocol \(\Pi\)};
\node[process] (Phi) at (6.25,2.3) {\scriptsize Encrypt \(\Phi\)};
\node[process] (N) at (8.75,2.3) {\scriptsize Network \(N\)};
\node[process] (Theta) at (11.25,2.3) {\scriptsize Observe \(\Theta\)};

\node[font=\scriptsize, copendark] at (1.25,3) {\(U_A\)};
\node[font=\scriptsize, copendark] at (3.75,3) {\(U_\Pi\)};
\node[font=\scriptsize, copendark] at (6.25,3) {\(U_\Phi\)};
\node[font=\scriptsize, copendark] at (8.75,3) {\(U_N\)};
\node[font=\scriptsize, copendark] at (11.25,3) {\(U_\Theta\)};

\draw[arrow] (X.north) -- (A.south);
\draw[arrow] (XiA.north) -- (Pi.south);
\draw[arrow] (XiP.north) -- (Phi.south);
\draw[arrow] (XiC.north) -- (N.south);
\draw[arrow] (XiN.north) -- (Theta.south);

\draw[arrow] (A.south) -- (XiA.north);
\draw[arrow] (Pi.south) -- (XiP.north);
\draw[arrow] (Phi.south) -- (XiC.north);
\draw[arrow] (N.south) -- (XiN.north);
\draw[arrow] (Theta.south) -- (Y.north);

\draw[copenblue, dashed, thick] (-0.9,3.5) rectangle (9.1,4.1);
\node[font=\scriptsize\bfseries, copenblue] at (4.1,3.8) {System Model \(\Gamma=(A,\Pi,\Phi,N)\)};

\draw[red, dashed, thick] (9.6,3.5) rectangle (13.2,4.1);
\node[font=\scriptsize\bfseries, red] at (11.4,3.8) {Observation \(\Omega\)};

\draw[copengreen, thick, ->] (-0.6,-1.5) -- (13.1,-1.5);
\node[font=\scriptsize, copengreen, align=center] at (6.25,-2.1) 
    {Markov Chain: \(X \to \Xi_A \to \Xi_P \to \Xi_C \to \Xi_N \to Y\)};
\end{tikzpicture}
\caption{Complete causal chain from semantics to observed features}
\label{fig:causal_chain}
\end{figure}

\begin{proposition}[Existence and Measurability of Observation Channel]\label{prop:kernel}
Under the conditions that the above spaces are standard Borel spaces and \(N\), \(\Theta\) are random kernels (or compositions of measurable mappings and random kernels), there exists a random kernel \(K_\Sigma(\mathrm{d}y\mid x)\) from \(\mathcal{X}\) to \(\mathcal{Y}\) such that \(Y\mid X=x \sim K_\Sigma(\cdot\mid x)\), hence \(I(X;Y)\) is well-defined. This conclusion stems from the closure of random kernels under composition and the existence of regular conditional probability.
\end{proposition}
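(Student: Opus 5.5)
The plan is to build $K_\Sigma$ by composing the layer-wise Markov kernels along the causal chain of Figure~\ref{fig:causal_chain}, and then read $I(X;Y)$ off the joint law of $(X,Y)$.

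\emph{Step 1: each layer is a kernel.} Write each causal operator as a kernel between consecutive layer spaces.

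\begin{itemize}
\item For $\mathcal{G}_A$, define $K_A(B\mid x)=\mathbb{P}(\mathcal{G}_A(x,U_A)\in B)$. To ensure the chain is genuinely Markov, I will assume (or make explicit) that the exogenous noises $U_A,U_\Pi,U_\Phi,U_N,U_\Theta$ are mutually independent and independent of $X$.
\item Measurability of $x\mapsto K_A(B\mid x)$ follows from joint measurability of $\mathcal{G}_A$ together with Fubini--Tonelli on the product of the laws of $X$ and $U_A$.
\item The same construction gives $K_\Pi$, $K_\Phi$ and, by hypothesis, $K_N$ and $K_\Theta$. Where an operator is a measurable map followed by a random kernel, its kernel is the pushforward kernel $\delta_{f(\cdot)}$ composed with that random kernel.
\end{itemize}

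If the noises are not independent, I would instead obtain each conditional law $\mathbb{P}(\Xi_{\text{next}}\in\cdot\mid X,\Xi_A,\dots,\Xi_{\text{current}})$ directly. Such a regular conditional probability exists because all the spaces are standard Borel. The composition argument below then goes through verbatim, with kernels that depend on the whole past; this is the route I would fall back on.

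\emph{Step 2: compose.} Define
\[
K_\Sigma(\mathrm{d}y\mid x)=\int K_\Theta(\mathrm{d}y\mid \xi_N)\,K_N(\mathrm{d}\xi_N\mid \xi_C)\,K_\Phi(\mathrm{d}\xi_C\mid \xi_P)\,K_\Pi(\mathrm{d}\xi_P\mid \xi_A)\,K_A(\mathrm{d}\xi_A\mid x).
\]
The composition of two kernels $(K\circ L)(B\mid x)=\int K(B\mid z)L(\mathrm{d}z\mid x)$ is again a kernel. It is a probability measure in $B$ by monotone convergence, and measurable in $x$ by approximating $K(B\mid\cdot)$ by simple functions. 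Induction along the chain then shows $K_\Sigma$ is a kernel.

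To prove that $K_\Sigma$ gives the conditional law of $Y$, I would check that the joint law of $(X,\Xi_A,\dots,Y)$ equals $P_X\otimes K_A\otimes\cdots\otimes K_\Theta$ on measurable rectangles. This uses the independence of the noises, conditioning layer by layer, and Dynkin's $\pi$--$\lambda$ theorem. Marginalizing the intermediate layers then gives $\mathbb{P}(X\in A',Y\in B)=\int_{A'}K_\Sigma(B\mid x)\,P_X(\mathrm{d}x)$, which says exactly that $Y\mid X=x\sim K_\Sigma(\cdot\mid x)$ for $P_X$-a.e.\ $x$.

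\emph{Step 3: mutual information.} With $P_{XY}=P_X\otimes K_\Sigma$ a well-defined probability measure on the standard Borel space $\mathcal{X}\times\mathcal{Y}$, set
\[
I(X;Y)=D(P_{XY}\,\|\,P_X\otimes P_Y),
\]
the Gelfand--Yaglom--Perez supremum over finite measurable partitions. This quantity is always defined, taking values in $[0,\infty]$, and it is finite whenever, for instance, $\mathcal{X}$ is finite.

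\emph{Main obstacle.} I expect the delicate point to be Step 1's identification of the operators with kernels so that the chain $X\to\Xi_A\to\cdots\to Y$ is actually Markov. This requires either the independence of the noise variables or the regular-conditional-probability construction on standard Borel spaces. Once that is settled, the rest is routine measure theory.
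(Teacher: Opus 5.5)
Your proposal is correct and follows the same route as the paper: each layer is realized as a kernel, the kernels are composed along $X\to\Xi_A\to\Xi_P\to\Xi_C\to\Xi_N\to Y$ using closure of kernels under composition, and regular conditional probabilities on standard Borel spaces are used throughout. It supplies details the paper's two-line proof leaves implicit, notably the mutual independence of $U_A,U_\Pi,U_\Phi,U_N,U_\Theta$ from each other and from $X$ (or, failing that, your fallback via past-dependent regular conditional kernels), which is what makes the composed kernel a version of $P_{Y\mid X}$ for $P_X$-almost every $x$.
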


\begin{proof}
On standard Borel spaces, compositions of measurable mappings and random kernels remain random kernels, and regular conditional probability exists\cite{gray2011probability}. Since \(\mathcal{G}_A,\Pi,\Phi,\Theta\) are causally measurable mappings and \(N\) is a causal random channel, their composition yields \(K_\Sigma(\mathrm{d}y\mid x)\).
\end{proof}

\begin{proposition}[Data Processing Structure]\label{prop:dpi}
The composite chain induces the Markov relationship
\begin{equation}
X \to \Xi_A \to \Xi_P \to \Xi_C \to \Xi_N \to Y,
\end{equation}
thus for any intermediate layer variable \(Z\), \(I(X;Y)\le I(X;Z)\). In particular, letting \(Y_{\mathrm{raw}}=\Xi_P\) or \(Y_{\mathrm{raw}}=\Xi_C\), we have \(I(X;Y)\le I(X;Y_{\mathrm{raw}})\).
\end{proposition}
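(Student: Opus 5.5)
The plan is to establish the Markov structure first and then obtain the inequality from the data processing inequality applied along the chain.

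\emph{Step 1 (Markov property).} I will show that for consecutive layers the conditional law of each layer given all earlier ones depends only on its immediate predecessor. The representation \(\Xi_A=\mathcal{G}_A(X,U_A)\), \(\Xi_P=\Pi(\Xi_A,U_\Pi)\), \(\Xi_C=\Phi(\Xi_P,U_\Phi)\), \(\Xi_N=N(\Xi_C,U_N)\), \(Y=\Theta(\Xi_N;\cdot,U_\Theta)\) gives this if the exogenous noises \(U_A,U_\Pi,U_\Phi,U_N,U_\Theta\) are mutually independent and independent of \(X\). I will state this independence explicitly as the modeling convention implicit in Definition~\ref{def:sigma}, namely that each operator's randomness is local to that operator. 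It is equivalent to saying each arrow in Figure~\ref{fig:causal_chain} is a random kernel.

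Under this convention, \(\Xi_P\) is conditionally independent of \(X\) given \(\Xi_A\). The reason is that \(\Xi_P\) is a measurable function of \((\Xi_A,U_\Pi)\), and \(U_\Pi\) is independent of \((X,\Xi_A)\). The same argument works at every later link. On standard Borel spaces, Proposition~\ref{prop:kernel} supplies regular conditional distributions, so these conditional-independence statements are well defined. Chaining them gives the joint law
\[
P_X\otimes K_A\otimes K_\Pi\otimes K_\Phi\otimes K_N\otimes K_\Theta,
\]
which is exactly the Markov chain.

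\emph{Step 2 (DPI).} For any intermediate \(Z\in\{\Xi_A,\Xi_P,\Xi_C,\Xi_N\}\), Step 1 yields the three-term chain \(X\to Z\to Y\), meaning \(X\) and \(Y\) are conditionally independent given \(Z\). I will apply the chain rule for mutual information in its general measure-theoretic form (Gray):
\[
I(X;Y,Z)=I(X;Z)+I(X;Y\mid Z)=I(X;Y)+I(X;Z\mid Y).
\]
Conditional independence gives \(I(X;Y\mid Z)=0\), and \(I(X;Z\mid Y)\ge0\), so \(I(X;Y)\le I(X;Z)\). Taking \(Z=\Xi_P\) or \(Z=\Xi_C\) gives the particular case \(I(X;Y)\le I(X;Y_{\mathrm{raw}})\).

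The main obstacle I expect is Step 1's justification that the noises are independent across layers. Protocol-side randomness such as congestion control may correlate with network conditions \(U_N\), and the definitions above never state independence. If this cannot simply be adopted as an assumption, I would fall back on the weaker requirement that is actually needed: each layer's output is conditionally independent of all earlier layers and of \(X\), given its immediate input. This is a kernel assumption of the kind Proposition~\ref{prop:kernel} already presumes. Step 2 goes through unchanged under it.

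A secondary technical point is that the chain rule for possibly infinite mutual information on general spaces should be taken from a reference rather than reproved. The inequality is trivial if \(I(X;Z)=\infty\).
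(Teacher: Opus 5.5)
Your proposal is correct and follows essentially the same route as the paper. Both obtain the Markov structure from each layer acting as a random kernel on its immediate input and then apply the data processing inequality, which you spell out via the chain rule; you are also right to state the noise-independence (equivalently, kernel) assumption explicitly, since the paper's one-line proof uses it silently when it asserts that every arrow in the chain is a random kernel.
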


\begin{proof}
By causality and measurability, each arrow in the chain is a composition of random kernels, satisfying the Markov property. By the data processing inequality\cite{cover2006elements}, for the Markov chain \(X \to \Xi_A \to \Xi_P \to \Xi_C \to \Xi_N \to Y\), we have \(I(X;Y) \le I(X;Z)\) for any intermediate variable \(Z\).
\end{proof}

A key property of observation model \(\Omega\) is whether it can preserve the statistical variability of arrival packet sequences. If the observation mapping \(\Theta\) compresses all inputs to a constant (e.g., only recording ``traffic exists'' while discarding all length, timing, direction information), then regardless of how rich the statistical patterns produced by system \(\Gamma\), observed features \(Y\) cannot reflect semantic differences. To exclude this degenerate case, we provide the following definition:

\begin{definition}[Non-Degenerate Observation (with respect to statistic used)]\label{def:nondegenerate}
Given window \(T\) and statistic \(\varphi\) from Definition \ref{def:semantic-distinguishability}, the observation model \(\Omega\) is said to be non-degenerate with respect to this statistic if there exist a constant \(\rho\in(0,1]\) and a bounded measurable mapping \(\psi:\mathcal{Y}\to[-1,1]\) (depending only on \(T\) and \(\varphi\), independent of specific semantic pairs and their priors) such that for any semantic pair \(x\neq x'\in\mathcal{X}\) and any \(\delta>0\),
\begin{equation}
\begin{aligned}
&\Big|\mathbb{E}\big[\varphi(e_N(\Xi_N|_{[0,T]}))\mid X=x\big]-\mathbb{E}\big[\varphi(e_N(\Xi_N|_{[0,T]}))\mid X=x'\big]\Big|\ \ge\ \delta \\
&\qquad\Rightarrow\quad
\Big|\mathbb{E}[\psi(Y)\mid X=x]-\mathbb{E}[\psi(Y)\mid X=x']\Big|\ \ge\ \rho\,\delta.
\end{aligned}
\end{equation}
Intuitively, this requires that the observation mapping preserves at least a positive proportion of the conditional expectation difference of statistic \(\varphi\)—even if there is information loss (\(\rho<1\)), statistical differences are not completely erased (\(\rho>0\)). The constant \(\rho\) and mapping \(\psi\) depend only on the observation model \(\Omega\), window \(T\), and the structure of statistic \(\varphi\), not on the specific semantic pairs being compared or their prior distributions, reflecting an inherent property of the observation model.
\end{definition}

This definition excludes extremely degenerate observation models, such as: binary indicators recording only ``whether there is traffic'', aggregators counting only total session duration while discarding all fine-grained information, constant mappings returning fixed feature vectors for all traffic, etc. Real-world side-channel analysis models typically satisfy non-degeneracy because observers wish to maximize extractable information and will retain key dimensions such as packet length, timing, and direction.

Thus far, we have constructed the complete side-channel analysis model: system \(\Gamma\) produces distinguishable statistical patterns at the protocol layer, which, after encryption and transmission satisfying mapping non-degeneracy, arrive at the observation path; the non-degenerate observation model \(\Omega\) captures these patterns and extracts features \(Y\). The next section will prove, based on this framework, that under the premise that the system satisfies window-level mapping non-degeneracy (Definition \ref{def:non-degeneracy}), the mutual information \(I(X;Y)\) between observed features and semantics is necessarily strictly greater than zero, making side-channel leakage inevitable.

\section{Side-Channel Leakage Existence Theorem}

Based on the modeling framework of the previous section, this section provides a rigorous statement and proof of the inevitability of side-channel leakage. We first establish a stable propagation chain of expectation differences for binary semantic pairs and provide an explicit lower bound on mutual information, then generalize to the general case of multi-semantic spaces. The entire argument relies only on the causally measurable composite channel structure, the data processing inequality, and the stable propagation of bounded metrics, without depending on specific protocol details or particular metric choices.

\subsection{Technical Strengthening Assumptions}

Let \(X\) be the application semantic variable and \(Y\) be the observable features obtained by the observer under model \(\Sigma=(\Gamma,\Omega)\). The previous section provided the Markov chain \(X\to \Xi_A\to \Xi_P\to \Xi_C\to \Xi_N\to Y\) and corresponding random kernel \(K_\Sigma(\mathrm{d}y\mid x)\) (Proposition \ref{prop:kernel}).

Definition \ref{def:semantic-distinguishability} provided the basic concept of semantic distinguishability: there exists a statistic \(\varphi\) on the unified trajectory space \(\mathcal{Z}\) induced at the protocol layer such that conditional expectation differences for different semantics are \(\ge\bar\Delta\). To ensure this distinguishability can stably propagate to the observation layer, we need to impose Lipschitz continuity constraints on statistic \(\varphi\):

\begin{assumption}[Lipschitz Robustness]\label{ass:lipschitz}
The \(\varphi\) in Definition \ref{def:semantic-distinguishability} is \(L_\varphi\)-Lipschitz continuous with respect to metric \(d\) on \(\mathcal{Z}\):
\begin{equation}
|\varphi(z)-\varphi(z')|\ \le\ L_\varphi\, d(z,z'),\quad \forall z,z'\in\mathcal{Z}.
\end{equation}
\end{assumption}

This assumption ensures the stability of statistics under trajectory perturbations—small changes in trajectories lead only to small changes in statistics. Common bounded Lipschitz statistics include: truncated versions of total bytes in window, ratios of upstream/downstream packets (naturally bounded), saturated versions of packet arrival intervals, etc. In practice, the vast majority of statistical features used for traffic analysis can be converted into Lipschitz continuous functions through appropriate truncation or normalization.

\subsection{Side-Channel Leakage Theorem for Binary Semantic Pairs}

We first establish a strict leakage lower bound for binary semantic pairs. Denote the leakage amount as \(L(\Gamma,\Omega)=I(X;Y)\).

\begin{theorem}[Binary Semantic Side-Channel Leakage Theorem]\label{thm:binary-leakage}
Under the conditions of Proposition \ref{prop:kernel}, fix window \(T>0\). Suppose there exists a distinguishable semantic pair \(x\neq x'\in\mathcal{X}\) satisfying the prior positive mass condition \(\mathbb{P}(X=x)>0\), \(\mathbb{P}(X=x')>0\). If the system satisfies the following conditions:

\begin{enumerate}[label=(\roman*)]
\item \textbf{Mapping Non-Degeneracy} (instantiation of Definition \ref{def:non-degeneracy} for the semantic pair): there exist metric \(d\) and constant \(C<\infty\) such that for this semantic pair,
\begin{equation}
\begin{aligned}
\max\Big\{\mathbb{E}\Big[d\big(e_P(\Xi_P|_{[0,T]}),\,e_N(\Xi_N|_{[0,T]})\big)\,\Big|\,X=x\Big],\,\\
\mathbb{E}\Big[d\big(e_P(\Xi_P|_{[0,T]}),\,e_N(\Xi_N|_{[0,T]})\big)\,\Big|\,X=x'\Big]\Big\}\ \le\ C.
\end{aligned}
\end{equation}

\item \textbf{Semantic Distinguishability} (Definition \ref{def:semantic-distinguishability}): there exist \(\bar\Delta>0\) and bounded measurable statistic \(\varphi:\mathcal{Z}\to[-M,M]\) such that
\begin{equation}
\big|\mathbb{E}[\varphi(e_P(\Xi_P|_{[0,T]}))\mid X=x]-\mathbb{E}[\varphi(e_P(\Xi_P|_{[0,T]}))\mid X=x']\big|\ \ge\ \bar\Delta.
\end{equation}

\item \textbf{Lipschitz Robustness} (Assumption \ref{ass:lipschitz}): the above statistic \(\varphi\) is \(L_\varphi\)-Lipschitz continuous with respect to metric \(d\) on \(\mathcal{Z}\):
\begin{equation}
|\varphi(z)-\varphi(z')|\ \le\ L_\varphi\, d(z,z'),\quad \forall z,z'\in\mathcal{Z}.
\end{equation}

\item \textbf{Non-Degenerate Observation} (Definition \ref{def:nondegenerate} for statistic \(\varphi\)): there exist constant \(\rho\in(0,1]\) and bounded measurable \(\psi:\mathcal{Y}\to[-1,1]\) (depending only on \(T\) and \(\varphi\), independent of semantic pairs) such that for any semantic pair \(a\neq b\in\mathcal{X}\),
\begin{equation}
\begin{aligned}
&\Big|\mathbb{E}\big[\varphi(e_N(\Xi_N|_{[0,T]}))\mid X=a\big]-\mathbb{E}\big[\varphi(e_N(\Xi_N|_{[0,T]}))\mid X=b\big]\Big|\ \ge\ \delta \\
&\qquad\Rightarrow\quad
\Big|\mathbb{E}[\psi(Y)\mid X=a]-\mathbb{E}[\psi(Y)\mid X=b]\Big|\ \ge\ \rho\,\delta.
\end{aligned}
\end{equation}

\item \textbf{Distinguishability Propagation Condition}: the metric deviation bound \(C\), Lipschitz constant \(L_\varphi\), and distinguishability margin \(\bar\Delta\) satisfy
\begin{equation}
C < \frac{\bar\Delta}{2L_\varphi}.
\end{equation}
\end{enumerate}

Then the mutual information between observed features and semantics satisfies \(I(X;Y) > 0\), with explicit lower bound
\begin{equation}
I(X;Y)\ \ge\ \frac{2}{\ln 2}\,\mathbb{P}(X=x)\mathbb{P}(X=x')\,
\left(\frac{\rho\,[\bar\Delta-2L_\varphi C]}{2}\right)^2.
\end{equation}

In particular, restricting to the binary equal-prior subproblem (i.e., \(X\in\{x,x'\}\) and \(\mathbb{P}(X=x)=\mathbb{P}(X=x')=1/2\)), the lower bound becomes
\begin{equation}
I(X;Y)\ \ge\ \frac{1}{2\ln 2}\left(\frac{\rho\,[\bar\Delta-2L_\varphi C]}{2}\right)^2.
\end{equation}
\end{theorem}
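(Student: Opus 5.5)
The plan is to follow the expectation difference through the chain in three steps: protocol layer, then arrival layer, then observed feature \(Y\). Then I turn the final expectation gap into a mutual-information bound using Pinsker's inequality.

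\emph{Step 1 (protocol to network layer).} Abbreviate \(Z_P=e_P(\Xi_P|_{[0,T]})\) and \(Z_N=e_N(\Xi_N|_{[0,T]})\). For \(a\in\{x,x'\}\), Lipschitz robustness (iii) and mapping non-degeneracy (i) give
\[
\bigl|\mathbb{E}[\varphi(Z_N)\mid X=a]-\mathbb{E}[\varphi(Z_P)\mid X=a]\bigr|\le \mathbb{E}[|\varphi(Z_N)-\varphi(Z_P)|\mid X=a]\le L_\varphi\,\mathbb{E}[d(Z_P,Z_N)\mid X=a]\le L_\varphi C .
\]
Both expectations are finite because \(\varphi\) is bounded. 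The triangle inequality together with (ii) then yields
\[
\bigl|\mathbb{E}[\varphi(Z_N)\mid X=x]-\mathbb{E}[\varphi(Z_N)\mid X=x']\bigr|\ge\bar\Delta-2L_\varphi C=:\delta .
\]
Condition (v) guarantees \(\delta>0\).

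\emph{Step 2 (network layer to observation).} I apply non-degenerate observation (iv) with \(a=x\), \(b=x'\) and this \(\delta\). This gives \(|\mathbb{E}[\psi(Y)\mid X=x]-\mathbb{E}[\psi(Y)\mid X=x']|\ge\rho\delta\). Since \(\psi\) takes values in \([-1,1]\), the variational representation of total variation gives \(\mathrm{TV}(P_{Y|x},P_{Y|x'})\ge \rho\delta/2\), where \(P_{Y|a}=K_\Sigma(\cdot\mid a)\) is the kernel from Proposition \ref{prop:kernel}.

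\emph{Step 3 (expectation gap to mutual information).} Write \(p=\mathbb{P}(X=x)\), \(q=\mathbb{P}(X=x')\), and let \(P_Y\) be the marginal law of \(Y\). The identity \(I(X;Y)=\sum_a \mathbb{P}(X=a)\,D(P_{Y|a}\|P_Y)\) holds in the general (possibly non-discrete) case. Keeping only the terms \(a=x,x'\) and applying Pinsker, \(D\ge \frac{2}{\ln 2}\mathrm{TV}^2\) in bits, gives
\[
I(X;Y)\ge \frac{2}{\ln 2}\bigl(p\,\mathrm{TV}(P_{Y|x},P_Y)^2+q\,\mathrm{TV}(P_{Y|x'},P_Y)^2\bigr).
\]
This leaves lower-bounding the right-hand side by \(\frac{2}{\ln 2}pq\,(\rho\delta/2)^2\).

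\emph{Main obstacle.} With \(\alpha=\mathrm{TV}(P_{Y|x},P_Y)\), \(\beta=\mathrm{TV}(P_{Y|x'},P_Y)\) and \(t=\rho\delta/2\), the triangle inequality only gives \(\alpha+\beta\ge t\). Minimizing \(p\alpha^2+q\beta^2\) under that constraint gives \(\frac{pq}{p+q}t^2\ge pq\,t^2\), since \(p+q\le1\). This yields exactly the stated constant \(\frac{2}{\ln2}pq(\rho\delta/2)^2\). If this turns out to be delicate, for instance when \(X\) has additional atoms, I would fall back on the binary subproblem. Restrict \(X\) to \(\{x,x'\}\) with renormalized prior, use the Markov structure of Proposition \ref{prop:dpi} to compare with conditioning on the event \(X\in\{x,x'\}\), and argue directly in that case.

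For equal priors the mixture is \(P_Y=\tfrac12(P_{Y|x}+P_{Y|x'})\). Then \(\mathrm{TV}(P_{Y|x},P_Y)=\tfrac12\mathrm{TV}(P_{Y|x},P_{Y|x'})\ge t/2\), and likewise for \(x'\). So \(I\ge\frac{2}{\ln2}\cdot(t/2)^2=\frac{1}{2\ln2}t^2\), which is the specialized bound. Strict positivity follows from \(\rho>0\), \(\delta>0\) and \(p,q>0\).
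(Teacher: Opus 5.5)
Your proposal is correct and matches the paper's proof step for step. The Lipschitz, non-degeneracy and triangle-inequality propagation to $\bar\Delta-2L_\varphi C$, the appeal to non-degenerate observation, and the total-variation bound with $M=1$ are exactly the paper's Steps 2--4. The one difference is that the paper only cites a ``standard relationship'' $I(X;Y)\ge\frac{2}{\ln 2}\,pq\,\mathrm{TV}^2(P_{Y\mid X=x},P_{Y\mid X=x'})$, whereas you prove it. Your derivation goes through $I(X;Y)=\mathbb{E}_X[D(P_{Y\mid X}\|P_Y)]$, then Pinsker, the triangle inequality and the Cauchy--Schwarz minimization. For non-discrete $X$ that identity is an integral rather than a sum, but you may still drop everything except the two atoms, since $D\ge 0$. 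Your argument is valid and even gives the slightly sharper factor $pq/(p+q)$, so the fallback to the renormalized binary subproblem is unnecessary.
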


The theorem focuses on analysis of binary semantic pairs \((x,x')\), which is an essential requirement of the proof technique—arguments based on Pinsker's inequality and total variation are naturally pairwise. Condition (i) only requires this semantic pair to satisfy mapping non-degeneracy; in efficiency-prioritized practical systems, if the mapping maintains non-degeneracy for all semantics, it naturally holds for any semantic pair. Condition (v) characterizes the boundary of side-channel existence: the metric deviation \(C\) cannot be too large, otherwise the distinguishability at the protocol layer will be completely erased by perturbations during propagation to the network layer. In typical website fingerprinting scenarios, \(x\) and \(x'\) can be understood as two semantic classes ``visiting website A'' and ``visiting website B'', where \(\bar\Delta\) can be chosen as expectation differences in aggregate statistics such as total bytes, upstream/downstream packet ratios, burst duration/payload within a fixed time window; \(C\) corresponds to the average deviation in length and arrival time in the unified trajectory space \(\mathcal{Z}\) due to encryption encapsulation, TCP retransmissions, and random jitter within the same window; and \(\rho\) reflects the proportion of these differences that the observation link can still preserve under limitations such as sampling granularity and timestamp precision.

The proof requires the following lemma:

\begin{lemma}[Relationship between Expectation Difference and Total Variation]\label{lem:exp-to-tv}
Let \(f:\mathcal{S}\to[-M,M]\) be a bounded measurable function, and \(P,Q\) be two probability measures on \(\mathcal{S}\). If
\begin{equation}
|\mathbb{E}_P[f] - \mathbb{E}_Q[f]| \ge \delta,
\end{equation}
then the total variation distance satisfies
\begin{equation}
\mathrm{TV}(P,Q) \ge \frac{\delta}{2M}.
\end{equation}
\end{lemma}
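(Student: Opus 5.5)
The plan is to bound the difference of expectations by an integral of $|p-q|$, where $p$ and $q$ are densities of $P$ and $Q$. First fix a dominating measure, for instance $\mu=P+Q$, and set $p=\mathrm{d}P/\mathrm{d}\mu$ and $q=\mathrm{d}Q/\mathrm{d}\mu$; these exist by Radon--Nikodym because $P,Q\ll\mu$. Use the standard identity $\mathrm{TV}(P,Q)=\sup_{A}|P(A)-Q(A)|=\tfrac12\int|p-q|\,\mathrm{d}\mu$. This is the normalization of total variation in which Pinsker's inequality reads $\mathrm{TV}\le\sqrt{D/2}$ with $D$ in nats. That is the form needed later in Theorem \ref{thm:binary-leakage}, so I will state it explicitly, since the constant in the lemma depends on it.

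The core step is a single chain of inequalities:
\begin{equation}
\delta\le\Big|\int f\,(p-q)\,\mathrm{d}\mu\Big|\le\int|f|\,|p-q|\,\mathrm{d}\mu\le M\int|p-q|\,\mathrm{d}\mu=2M\,\mathrm{TV}(P,Q).
\end{equation}
Dividing by $2M$ gives $\mathrm{TV}(P,Q)\ge\delta/(2M)$. Integrability is automatic: $f$ is bounded and measurable, and $p,q$ are probability densities, so every integral above is finite.

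A sharper route also works. Since $\int(p-q)\,\mathrm{d}\mu=0$, we may replace $f$ by $f-c$ for any constant $c$ without changing the left side. With $f\in[-M,M]$, the choice $c=0$ gives $\sup|f-c|\le M$; one cannot do better than $M$ in general, because the range width is $2M$. Equivalently, write $\int f(p-q)=\int_{\{p>q\}}f(p-q)+\int_{\{p<q\}}f(p-q)$; each piece is at most $M\,\mathrm{TV}$ in absolute value, since $\int_{\{p>q\}}(p-q)=\mathrm{TV}$. Either version suffices.

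There is no substantive obstacle here. The only point needing care is the normalization convention for $\mathrm{TV}$, which must match the one used when Pinsker's inequality is invoked downstream. If the paper instead uses $\mathrm{TV}=\int|p-q|$, the same chain would give the stronger bound $\delta/M$, and the stated bound would still hold a fortiori.
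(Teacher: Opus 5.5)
Your proof is correct and is essentially the paper's argument. The paper cites the variational form $\mathrm{TV}(P,Q)=\tfrac12\sup_{\|g\|_\infty\le 1}|\mathbb{E}_P[g]-\mathbb{E}_Q[g]|$ with $g=f/M$, and your chain through the densities $p,q$ with respect to $\mu=P+Q$ is simply a direct proof of the half of that representation being used; your normalization $\mathrm{TV}=\sup_A|P(A)-Q(A)|$ also matches the paper's.
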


\begin{proof}
By the dual representation of total variation,
\begin{equation}
\mathrm{TV}(P,Q) = \frac{1}{2}\sup_{\|g\|_\infty\le 1} |\mathbb{E}_P[g] - \mathbb{E}_Q[g]|.
\end{equation}
For bounded function \(f:\mathcal{S}\to[-M,M]\), normalize \(g := f/M\) to get \(\|g\|_\infty = 1\), therefore
\begin{equation}
|\mathbb{E}_P[f] - \mathbb{E}_Q[f]| = M|\mathbb{E}_P[g] - \mathbb{E}_Q[g]| 
\le M \cdot 2\cdot\mathrm{TV}(P,Q) = 2M\cdot\mathrm{TV}(P,Q).
\end{equation}
If \(|\mathbb{E}_P[f] - \mathbb{E}_Q[f]| \ge \delta\), then \(\mathrm{TV}(P,Q) \ge \frac{\delta}{2M}\).
\end{proof}

\begin{proof}[Proof of Theorem \ref{thm:binary-leakage}]
The proof unfolds along ``\(\bar\Delta\to(L_\varphi)\to d\to(C_T)\to\rho\to I(X;Y)\)'': using Lipschitz property to bound protocol-layer differences to trajectory space, then absorbing perturbations using \(C_T\) and characterizing observable proportion with \(\rho\), finally concatenating inequalities to obtain the mutual information lower bound. The proof consists of four steps: establishing expectation difference at protocol layer \(\Xi_P\), propagating to network layer \(\Xi_N\), propagating to observation layer \(Y\), and finally converting to mutual information. The entire derivation proceeds in the unified trajectory space \(\mathcal{Z}\), relying on the distinguishability propagation chain formed by conditions (i)-(v).

\textbf{Step 1: Expectation difference at protocol layer is directly given by condition (ii).}

By condition (ii) semantic distinguishability, there exists bounded statistic \(\varphi:\mathcal{Z}\to[-M,M]\) satisfying
\begin{equation}
\big|\mathbb{E}[\varphi(e_P(\Xi_P|_{[0,T]}))\mid X=x] - \mathbb{E}[\varphi(e_P(\Xi_P|_{[0,T]}))\mid X=x']\big| \ge \bar\Delta.
\end{equation}

\textbf{Step 2: Propagation of expectation difference from protocol layer to network layer.}

Introduce simplified notation: \(z_P:=e_P(\Xi_P|_{[0,T]})\), \(z_N:=e_N(\Xi_N|_{[0,T]})\) as trajectory representations in unified space \(\mathcal{Z}\).

By condition (i) mapping non-degeneracy, for semantic pair \(x\neq x'\in\mathcal{X}\),
\begin{equation}
\mathbb{E}\big[d(z_P, z_N)\,\big|\,X=x\big] \le C,\quad
\mathbb{E}\big[d(z_P, z_N)\,\big|\,X=x'\big] \le C.
\end{equation}

By condition (iii) Lipschitz robustness, for any trajectories \(z_P, z_N\in\mathcal{Z}\),
\begin{equation}
|\varphi(z_P) - \varphi(z_N)| \le L_\varphi \cdot d(z_P, z_N).
\end{equation}

Taking conditional expectation with respect to \(X=x\):
\begin{align}
\big|\mathbb{E}[\varphi(z_P)\mid X=x] - \mathbb{E}[\varphi(z_N)\mid X=x]\big|
&= \big|\mathbb{E}[\varphi(z_P) - \varphi(z_N)\mid X=x]\big|\nonumber \\
&\le \mathbb{E}\big[|\varphi(z_P) - \varphi(z_N)|\,\big|\,X=x\big] \quad\text{(Jensen's inequality)}\nonumber\\
&\le \mathbb{E}\big[L_\varphi \cdot d(z_P,z_N)\,\big|\,X=x\big] \quad\text{(Lipschitz property)}\nonumber\\
&= L_\varphi \cdot \mathbb{E}\big[d(z_P,z_N)\,\big|\,X=x\big]\nonumber\\
&\le L_\varphi \cdot C. \quad\text{(Condition (i))}
\end{align}

Similarly for \(X=x'\),
\begin{equation}
\big|\mathbb{E}[\varphi(z_P)\mid X=x'] - \mathbb{E}[\varphi(z_N)\mid X=x']\big| \le L_\varphi \cdot C.
\end{equation}

Applying the triangle inequality:
\begin{align}
&\Big|\mathbb{E}[\varphi(z_N)\mid X=x]-\mathbb{E}[\varphi(z_N)\mid X=x']\Big| \nonumber\\
&\ge \Big|\mathbb{E}[\varphi(z_P)\mid X=x]-\mathbb{E}[\varphi(z_P)\mid X=x']\Big| \nonumber\\
&\quad - \Big|\mathbb{E}[\varphi(z_P)\mid X=x]-\mathbb{E}[\varphi(z_N)\mid X=x]\Big| \nonumber\\
&\quad - \Big|\mathbb{E}[\varphi(z_P)\mid X=x']-\mathbb{E}[\varphi(z_N)\mid X=x']\Big| \nonumber\\
&\ge \bar\Delta - L_\varphi C - L_\varphi C \nonumber\\
&= \bar\Delta - 2L_\varphi C\ =:\ \delta_N.
\end{align}

By condition (v), \(C < \frac{\bar\Delta}{2L_\varphi}\), thus \(\delta_N = \bar\Delta - 2L_\varphi C > 0\).

\textbf{Step 3: Propagation of expectation difference from network layer to observation layer.}

By condition (iv) non-degenerate observation, applying to the expectation difference \(\delta_N>0\) obtained in the previous step and semantic pair \((x,x')\), there exists bounded observation statistic \(\psi:\mathcal{Y}\to[-1,1]\) such that
\begin{equation}
\Big|\mathbb{E}[\psi(Y)\mid X=x]-\mathbb{E}[\psi(Y)\mid X=x']\Big|\ \ge\ \rho\,\delta_N
= \rho(\bar\Delta - 2L_\varphi C).
\end{equation}

\textbf{Step 4: From expectation difference to mutual information (including prior weighting).}

By Lemma \ref{lem:exp-to-tv}, applied to observation-layer conditional distributions and statistic \(\psi:\mathcal{Y}\to[-1,1]\) (i.e., \(M=1\)), we obtain
\begin{equation}
\mathrm{TV}\!\left(P_{Y\mid X=x},P_{Y\mid X=x'}\right)\ \ge\ \frac{\rho\,\delta_N}{2}
= \frac{\rho(\bar\Delta - 2L_\varphi C)}{2}.
\end{equation}

For general prior distributions, using the standard relationship between mutual information and total variation of conditional distributions (see \cite{cover2006elements}),
\begin{equation}
I(X;Y)\ \ge\ \frac{2}{\ln 2}\,\mathbb{P}(X=x)\mathbb{P}(X=x')\,\mathrm{TV}^2\!\left(P_{Y\mid X=x},P_{Y\mid X=x'}\right).
\end{equation}

Substituting the total variation lower bound:
\begin{equation}
I(X;Y)\ \ge\ \frac{2}{\ln 2}\,\mathbb{P}(X=x)\mathbb{P}(X=x')\left(\frac{\rho(\bar\Delta - 2L_\varphi C)}{2}\right)^2.
\end{equation}

By condition (v) and the prior positive mass condition, the right-hand side is strictly positive, thus \(I(X;Y) > 0\).

In the binary equal-prior subproblem (\(\mathbb{P}(X=x)=\mathbb{P}(X=x')=1/2\)), the above becomes
\begin{equation}
I(X;Y)\ \ge\ \frac{2}{\ln 2}\cdot\frac{1}{4}\left(\frac{\rho(\bar\Delta - 2L_\varphi C)}{2}\right)^2
= \frac{1}{2\ln 2}\left(\frac{\rho(\bar\Delta - 2L_\varphi C)}{2}\right)^2.
\end{equation}

This completes the proof.
\end{proof}

The mutual information lower bound is prior-dependent: the lower bound of \(I(X;Y)\) is weighted by the prior mass \(\mathbb{P}(X=x)\mathbb{P}(X=x')\) of the distinguishable semantic pair. When the prior is unknown, a conservative lower bound \(p_{\min}^2\) can be used, where \(p_{\min}\) is the minimum prior mass on the support set. The core insight of the proof is: conditions (i)-(v) constitute a stable propagation chain of expectation differences—distinguishability is propagated at each step with controllable loss \(L_\varphi C\); as long as condition (v) ensures that the total loss does not exceed half of the initial margin \(\bar\Delta\), it ultimately remains positive at the observation layer.

\subsection{Inevitability of Leakage in Multi-Semantic Spaces}

Based on the binary theorem, we now generalize to the general case of multi-semantic spaces, establishing the inevitability of side-channel leakage.

\begin{corollary}[Multi-Semantic Side-Channel Existence]\label{cor:multiclass-leakage}
Let semantic space \(\mathcal{X}\) be non-trivial (\(|\mathcal{X}|\ge 2\) and prior support contains at least two elements). Under the conditions of Proposition \ref{prop:kernel}, fix window \(T>0\). If encrypted communication system \(\Gamma\) and observation model \(\Omega\) satisfy:

\begin{enumerate}[label=(\roman*)]
\item \textbf{Efficiency-First Design}: there exist metric \(d\) and constant \(C<\infty\) such that for all \(x\in\mathcal{X}\),
\begin{equation}
\mathbb{E}\Big[d\big(e_P(\Xi_P|_{[0,T]}),\,e_N(\Xi_N|_{[0,T]})\big)\,\Big|\,X=x\Big]\ \le\ C.
\end{equation}

\item \textbf{Semantic Diversity}: there exists at least one distinguishable semantic pair \(x\neq x'\in\mathcal{X}\) and bounded Lipschitz statistic \(\varphi:\mathcal{Z}\to[-M,M]\) (with Lipschitz constant \(L_\varphi\)) such that
\begin{equation}
\big|\mathbb{E}[\varphi(e_P(\Xi_P|_{[0,T]}))\mid X=x]-\mathbb{E}[\varphi(e_P(\Xi_P|_{[0,T]}))\mid X=x']\big|\ \ge\ \bar\Delta>0,
\end{equation}
and \(\mathbb{P}(X=x)>0\), \(\mathbb{P}(X=x')>0\).

\item \textbf{Rational Observer}: observation model \(\Omega\) satisfies non-degeneracy (Definition \ref{def:nondegenerate}) for this statistic \(\varphi\), with \(\rho\in(0,1]\).

\item \textbf{Distinguishability Propagation Condition}: \(C < \frac{\bar\Delta}{2L_\varphi}\)
\end{enumerate}

Then the mutual information between observed features and semantics satisfies \(I(X;Y) > 0\).
\end{corollary}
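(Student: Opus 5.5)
The corollary reduces to Theorem~\ref{thm:binary-leakage}. I will instantiate that theorem with the distinguishable pair \((x,x')\) supplied by condition (ii). The only work is to check that each hypothesis of the theorem follows from the corollary's hypotheses. Then I will argue that the positive lower bound concerns the full variable \(X\) on \(\mathcal{X}\), not merely a restriction to \(\{x,x'\}\).

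\textbf{Verifying the hypotheses.} The checks run in this order.
\begin{itemize}
\item Condition (i) of the theorem: the corollary's efficiency bound holds for every \(x\in\mathcal{X}\), so it holds for \(x\) and \(x'\) in particular. Hence the maximum of the two conditional expectations is at most \(C\).
\item Conditions (ii) and (iii) of the theorem: these are given directly by semantic diversity, which supplies \(\bar\Delta\), a bounded \(\varphi\) with Lipschitz constant \(L_\varphi\), and the positive prior masses \(\mathbb{P}(X=x),\mathbb{P}(X=x')>0\).
\item Condition (iv) of the theorem: this is the rational-observer hypothesis. Definition~\ref{def:nondegenerate} is stated for all pairs \(a\neq b\) with \(\psi,\rho\) independent of the pair, so it matches the theorem's requirement verbatim.
\item Condition (v) of the theorem: this is identical to the corollary's condition (iv).
\end{itemize}
The standing assumptions of Proposition~\ref{prop:kernel} are also assumed in the corollary, so \(I(X;Y)\) is well defined.

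\textbf{Applying the theorem.} Theorem~\ref{thm:binary-leakage} then gives
\[
I(X;Y)\ \ge\ \frac{2}{\ln 2}\,\mathbb{P}(X=x)\mathbb{P}(X=x')\left(\frac{\rho\,[\bar\Delta-2L_\varphi C]}{2}\right)^2>0 .
\]
Here \(I(X;Y)\) is the mutual information under the full prior on \(\mathcal{X}\).

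\textbf{Main obstacle.} The delicate point is that the theorem's Step~4 invokes a pairwise inequality relating \(I(X;Y)\) to \(\mathrm{TV}(P_{Y|X=x},P_{Y|X=x'})\) for a general prior with possibly many atoms. If one distrusts that inequality in the multi-class setting, I would give a direct argument instead.

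First, introduce the indicator \(B=\mathbf{1}\{X=x\}\) conditioned on \(X\in\{x,x'\}\). Alternatively, work with \(I(X;Y)=\mathbb{E}_X[D(P_{Y|X}\|P_Y)]\).

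Next, suppose for contradiction that \(I(X;Y)=0\). Then \(X\) and \(Y\) are independent, so \(P_{Y|X=a}=P_Y\) for \(P_X\)-almost every \(a\). Since \(x\) and \(x'\) are atoms of positive mass, this gives \(P_{Y|X=x}=P_{Y|X=x'}\).

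This contradicts Steps~1--3 of the theorem, which show that these two laws differ in total variation by at least \(\rho(\bar\Delta-2L_\varphi C)/2>0\) (the bound obtained in Step~4 via Lemma~\ref{lem:exp-to-tv}).

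This route yields strict positivity, which is all the corollary claims, without relying on the explicit constant. I would present it as the proof and mention the explicit bound as a remark.
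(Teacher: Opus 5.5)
Your proof is correct. Its first half is exactly the paper's proof: you instantiate Theorem~\ref{thm:binary-leakage} at the pair \((x,x')\) from (ii), check that the corollary's (i)--(iv) supply the theorem's (i)--(v), and quote the explicit bound.

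\textbf{Where you diverge.} The finish you say you would actually present reopens the theorem's proof instead of using its statement as a black box. You take the separation \(|\mathbb{E}[\psi(Y)\mid X=x]-\mathbb{E}[\psi(Y)\mid X=x']|\ge\rho(\bar\Delta-2L_\varphi C)>0\) from Steps~1--3. You then argue by contradiction, using that \(I(X;Y)=0\) forces \(P_{Y\mid X=x}=P_Y=P_{Y\mid X=x'}\) at atoms of positive mass.

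\textbf{Comparison.} Your finish is valid and more elementary, since it only uses \(I(X;Y)=0\Leftrightarrow X\perp Y\). You do not even need Lemma~\ref{lem:exp-to-tv}: a nonzero difference of expectations of the bounded \(\psi\) already shows the two conditional laws differ. However, it yields only bare positivity. The paper's one-line citation delivers the quantitative bound for free and treats the theorem modularly.

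\textbf{Your worry about Step~4 is unfounded.} The prior-weighted inequality holds for arbitrary priors, including ones with many atoms or a non-atomic part. Write \(\pi=\mathbb{P}(X=x)\), \(\pi'=\mathbb{P}(X=x')\) and \(P_a=P_{Y\mid X=a}\).
\begin{itemize}
\item The integrand in \(I(X;Y)=\int D(P_a\|P_Y)\,\mathrm{d}P_X(a)\) is nonnegative. Keeping only the two atoms and applying Pinsker (in bits) gives \(I(X;Y)\ge\frac{2}{\ln 2}\bigl[\pi\,\mathrm{TV}^2(P_x,P_Y)+\pi'\,\mathrm{TV}^2(P_{x'},P_Y)\bigr]\).
\item Use the triangle inequality \(\mathrm{TV}(P_x,P_{x'})\le\mathrm{TV}(P_x,P_Y)+\mathrm{TV}(P_{x'},P_Y)\). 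Cauchy--Schwarz applied to \((a+b)^2\le(\pi a^2+\pi' b^2)(1/\pi+1/\pi')\) then yields \(I(X;Y)\ge\frac{2}{\ln 2}\frac{\pi\pi'}{\pi+\pi'}\mathrm{TV}^2(P_x,P_{x'})\).
\item Since \(\pi+\pi'\le 1\), this is at least \(\frac{2}{\ln 2}\pi\pi'\,\mathrm{TV}^2(P_x,P_{x'})\).
\end{itemize}
So citing the theorem directly, as the paper does, is fully justified. You can keep the explicit bound as the main conclusion and offer the independence argument as the remark, rather than the other way round.
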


\begin{proof}
By condition (i), efficiency-first design holds for all semantics, in particular for semantic pair \((x,x')\). Combined with conditions (ii)(iii)(iv), this semantic pair satisfies all conditions of Theorem \ref{thm:binary-leakage}. Therefore
\begin{equation}
I(X;Y) \ge \frac{2}{\ln 2}\,\mathbb{P}(X=x)\mathbb{P}(X=x')\,
\left(\frac{\rho\,[\bar\Delta-2L_\varphi C]}{2}\right)^2 > 0.
\end{equation}
This completes the proof.
\end{proof}

Corollary \ref{cor:multiclass-leakage} shows: in efficiency-prioritized multi-semantic systems, as long as at least one pair of applications is statistically distinguishable, side-channel leakage is inevitable. The universality of this conclusion stems from:

(1) \textbf{Efficiency-first is a system-level constraint} (condition i): real-world systems must satisfy bandwidth, latency, and other performance requirements, thus \(C<\infty\) for all semantics.

(2) \textbf{Semantic diversity is an inevitable consequence of applications} (condition ii): different application types—such as video streaming (large and dense packets), web browsing (small and sparse packets), instant messaging (bidirectional and symmetric), file transfer (unidirectional and concentrated)—necessarily exhibit differences in statistical features such as packet size distributions, timing patterns, and upstream/downstream ratios. This stems from application logic itself and is independent of encryption.

(3) \textbf{Rational observer is the analyst's objective} (condition iii): side-channel analysts aim to maximize information extraction and will retain key statistical features, thus \(\rho>0\).

Condition (ii) only requires ``at least one distinguishable pair exists'', not ``all semantics are pairwise distinguishable''—this is an extremely weak assumption. In real-world systems containing \(n\ge 2\) applications, it is nearly impossible to make all applications produce statistically indistinguishable traffic—this would require fundamentally changing how applications work, contradicting the purpose of application design.

Therefore, in the universal situation of ``efficiency-prioritized usable systems + non-trivial application scenarios + rational observers'', side-channel leakage \(I(X;Y)>0\) is inevitable.

\section{Theoretical Analysis and Discussion}

This section interprets the operational meaning of the existence theorem, discusses the transformation from information-theoretic lower bounds to actual attack performance, and the efficiency-privacy tradeoff revealed by the theorem.

\subsection{Operational Interpretation from Information-Theoretic Lower Bounds to Attack Feasibility}

Theorem \ref{thm:binary-leakage} and Corollary \ref{cor:multiclass-leakage} assert that \(I(X;Y)>0\) in efficiency-prioritized systems and provide explicit lower bounds. To translate this information-theoretic statement into operational predictions for actual attack performance, we establish a precise connection from total variation to classification accuracy.

\textbf{Accuracy lower bound in binary case.}
For the binary equal-prior problem (\(\mathbb{P}(X=x)=\mathbb{P}(X=x')=1/2\)), the error rate and accuracy of the optimal Bayes classifier satisfy the precise relationship:
\begin{equation}
P_e^\star = \frac{1-\mathrm{TV}(P_{Y\mid X=x},P_{Y\mid X=x'})}{2},\quad
\text{Acc}^\star = \frac{1+\mathrm{TV}(P_{Y\mid X=x},P_{Y\mid X=x'})}{2}.
\end{equation}

From step 4 of Theorem \ref{thm:binary-leakage}, we already obtained the total variation lower bound
\begin{equation}
\mathrm{TV}(P_{Y\mid X=x},P_{Y\mid X=x'}) \ge \frac{\rho(\bar\Delta-2L_\varphi C)}{2},
\end{equation}
where \(C\) is the mapping non-degeneracy constant (theorem condition (i)).

Substituting into the above, we obtain a lower bound on optimal accuracy:
\begin{equation}
\text{Acc}^\star \ge \min\left\{1,\ \frac{1}{2} + \frac{1}{4}\rho(\bar\Delta-2L_\varphi C)\right\}.
\end{equation}
Since statistic \(\varphi:\mathcal{Z}\to[-M,M]\) is bounded, \(\bar\Delta\le 2M\); combined with the setting of \(\rho\in(0,1]\) and \(\psi:\mathcal{Y}\to[-1,1]\) in the non-degenerate observation definition, the right-hand side naturally stays within bounds.

This is a lower bound on the optimal Bayes classifier accuracy, implying there exists a classifier achieving this level. For example, if \(\rho=0.8\), \(\bar\Delta=1.0\), \(L_\varphi C=0.2\), then
\begin{equation}
\text{Acc}^\star \ge \frac{1}{2} + \frac{1}{4}\times 0.8\times(1.0-0.4) = 0.62,
\end{equation}
meaning optimal accuracy is at least 62\%.

For general priors or multi-class cases (\(M>2\)), one can apply this bound to the hardest-to-distinguish binary subproblem (the bound may be loose), or use ``one-vs-all'' union strategies to obtain conservative lower bounds.

\textbf{Error rate lower bound from Fano's inequality.}
As a complement, we can also use Fano's inequality to characterize the information-theoretic error rate lower bound. Let semantic space \(\mathcal{X}\) contain \(M\ge 2\) elements, and \(P_e^\star\) be the minimum Bayes error probability. The classical Fano inequality gives
\begin{equation}
H(X|Y) \le H_2(P_e^\star) + P_e^\star\log_2(M-1),
\end{equation}
where \(H_2(p)=-p\log_2 p-(1-p)\log_2(1-p)\) is binary entropy (in bits), and \(H(X|Y)=H(X)-I(X;Y)\) is posterior entropy. Rearranging gives
\begin{equation}
P_e^\star \ge \frac{H(X)-I(X;Y)-1}{\log_2(M-1)}.
\end{equation}

For the binary equal-prior case, \(H(X)=1\), the above is equivalent to \(I(X;Y) \ge 1 - H_2(P_e^\star)\), thus \(P_e^\star \ge H_2^{-1}(1-I(X;Y))\).

For example, if \(I(X;Y)=0.1\) bits, then \(P_e^\star \ge H_2^{-1}(0.9)\approx 0.317\), meaning optimal error rate is at least 31.7\%, and accuracy at most approximately 68.3\%. Note that Fano's inequality provides a lower bound on error rate rather than an upper bound; it clarifies ``the information-theoretic limit that even optimal classifiers cannot surpass'', but cannot directly predict achievable performance of actual attacks.

\textbf{Cumulative effect of multiple observations.}
In actual side-channel analysis, attackers can often observe multiple sessions \(Y^{(1)}, Y^{(2)}, \ldots, Y^{(n)}\) (different sessions of the same user visiting the same website, or concatenating multiple time windows of the same flow). Assuming that given semantic \(X\), each observation is independent and identically distributed:
\begin{equation}
Y^{(1)}, \ldots, Y^{(n)} \overset{\text{i.i.d.}}{\sim} P_{Y\mid X}\quad\text{given}\quad X,
\end{equation}
then joint mutual information satisfies additivity:
\begin{equation}
I(X; Y^{(1:n)}) = \sum_{i=1}^n I(X; Y^{(i)}) = n \cdot I(X;Y).
\end{equation}

In this case, the error rate decays exponentially with the Chernoff information as exponent. For the binary equal-prior problem, the large deviation asymptotics of optimal Bayes error rate are characterized by the exact theorem:
\begin{equation}
-\lim_{n\to\infty}\frac{1}{n}\log P_e^\star(n) = \mathcal{C}(P_{Y\mid X=x}, P_{Y\mid X=x'}),
\end{equation}
where \(\mathcal{C}(\cdot,\cdot)\) is the Chernoff information (here logarithm base is \(e\), unit is nats), and \(\log\) denotes natural logarithm.

To establish a lower bound chain from total variation to Chernoff information, we introduce the Bhattacharyya coefficient \(\mathrm{BC}=\int\sqrt{p(y)q(y)}\,\mathrm{d}y\) and Bhattacharyya distance \(B=-\ln\mathrm{BC}\). Known relationships are
\begin{equation}
\mathrm{TV}(P,Q) \le \sqrt{1-e^{-2B}},\quad \mathcal{C}(P,Q) \ge B,
\end{equation}
thus
\begin{equation}
\mathcal{C}(P,Q) \ge B \ge -\frac{1}{2}\ln\!\big(1-\mathrm{TV}^2(P,Q)\big) > 0.
\end{equation}

Substituting our theorem result \(\mathrm{TV}(P_{Y\mid X=x},P_{Y\mid X=x'}) \ge \frac{\rho(\bar\Delta-2L_\varphi C)}{2}\) into the above gives an explicit lower bound on Chernoff information:
\begin{equation}
\mathcal{C}(P_{Y\mid X=x}, P_{Y\mid X=x'}) \ge -\frac{1}{2}\ln\!\left(1-\left[\frac{\rho(\bar\Delta-2L_\varphi C)}{2}\right]^2\right) > 0.
\end{equation}

This guarantees that the error rate exponentially tends to zero with the number of observations \(n\).

For the multi-class case (\(M>2\)), if each class has positive prior mass and samples are independent and identically distributed given \(X\), the error exponent lower bound of overall Bayes error rate is controlled by \(\min_{x\neq x'} \mathcal{C}(P_{Y\mid X=x}, P_{Y\mid X=x'})\). This conclusion comes from common union and worst-pair domination arguments: overall error rate is controlled by the hardest-to-distinguish semantic pair.

This explains universally observed phenomena in practice:

(1) \textbf{Long observation windows improve accuracy}: When growth of \(\bar\Delta(T)\) is not offset by \(2L_\varphi C(T)\), increasing time window \(T\) makes \(\delta_N(T)=\bar\Delta(T)-2L_\varphi C(T)\) increase, thus total variation lower bound and accuracy lower bound increase with \(T\).

(2) \textbf{Concatenating multiple sessions significantly improves identification}: Concatenating \(n\) independent sessions makes mutual information accumulate linearly to \(n\cdot I(X;Y)\) (in bits), with accuracy converging at Chernoff exponent (in nats).

(3) \textbf{Exponential convergence to perfect identification}: Under the conditional independence assumption, error rate decays exponentially as \(\exp(-n\cdot \mathcal{C})\) to zero.

Corollary \ref{cor:multiclass-leakage} guarantees \(I(X;Y)>0\) necessarily holds in efficiency-prioritized systems, thus the above cumulative effects are inevitable—as long as attackers have sufficient observation budget and the conditional independence assumption is satisfied, identification accuracy will tend toward perfection. This is the operational meaning of ``ineliminability'' of side-channel leakage.

\subsection{Fundamentality and Insurmountability of Efficiency-Privacy Tradeoff}

The five conditions of Theorem \ref{thm:binary-leakage} reveal the only way to reduce leakage \(I(X;Y)\) and its costs. We analyze the ``cost of breaking'' each condition.

\textbf{Condition (i): Mapping non-degeneracy \(C<\infty\).}
This is a direct manifestation of efficiency-first design. To break this condition (increase \(C\to\infty\)), the system must pay costs in at least one of the following dimensions:

Length dimension: Heavy padding increases \(\mathbb{E}[d_{\text{length}}(z_P, z_N)]\). For example, padding all packets to MTU (1500 bytes) inflates small packets (such as 40-byte ACKs) by tens of times, with bandwidth overhead reaching order-of-magnitude growth.

Timing dimension: Artificial delays increase \(\mathbb{E}[d_{\text{time}}(z_P, z_N)]\). For example, introducing second-level delays breaks real-time applications (VoIP requires one-way latency \(<\)150ms).

Direction dimension: Cover traffic changes upstream/downstream ratios, increasing \(\mathbb{E}[d_{\text{direction}}(z_P, z_N)]\). Bidirectional cover doubles bandwidth overhead.

Condition (v) in the theorem requires \(C < \frac{\bar\Delta}{2L_\varphi}\) to ensure leakage propagation. To make the leakage lower bound tend to zero, we need \(C\to\frac{\bar\Delta}{2L_\varphi}\), meaning efficiency overhead tends toward a critical value—this critical value is determined by the inherent distinguishability \(\bar\Delta\) of applications and cannot be changed.

\textbf{Condition (ii): Semantic distinguishability \(\bar\Delta>0\).}
This is an inevitable consequence of application diversity. To break this condition (make \(\bar\Delta\to 0\)), all applications need to produce statistically indistinguishable traffic. This is nearly impossible in practice; for instance, video streaming and web browsing differ by orders of magnitude in bandwidth requirements. Instant messaging and file downloading are fundamentally different in interaction patterns. VoIP and HTTP show significant differences in timing features.

To completely eliminate \(\bar\Delta\), all applications must be forced to transmit at the same constant rate, same packet size, same bidirectional pattern, which thoroughly destroys application functionality and makes the problem meaningless.

\textbf{Conditions (iii)-(iv): Lipschitz property and observation non-degeneracy.}
Condition (iii) is a technical requirement for statistic selection; in practice, almost all useful statistics (window total bytes, packet counts, upstream/downstream ratios, etc.) can satisfy Lipschitz property through truncation or normalization. Condition (iv) characterizes rational observers, determined by observer technical capabilities rather than controllable by system designers.

\textbf{Insurmountability of the tradeoff.}
Synthesizing the above analysis, the cost of reducing leakage presents a trilemma: when the broken condition is increasing \(C\) (relaxing non-degeneracy), the cost paid is sacrificing efficiency (bandwidth/latency); when the broken condition is decreasing \(\bar\Delta\) (homogenizing applications), the cost paid is destroying functionality (applications unusable); when the broken condition is decreasing \(\rho\) (compressing observation), the cost paid is beyond control (determined by observer).

Under fixed business requirement constraints such as bandwidth overhead requirements and end-to-end latency requirements, there exists an insurmountable leakage lower bound. This is not a flaw of any particular protocol implementation, but a structural limitation jointly determined by efficiency priority and semantic diversity.

\subsection{Theoretical Boundaries of Defense and Correct Engineering Objectives}

Corollary \ref{cor:multiclass-leakage} shows that zero leakage \((I(X;Y)=0)\) is unattainable in efficiency-prioritized systems. This conclusion has important guiding significance for defense mechanism design. Theorem \ref{thm:binary-leakage} further reveals the mechanism of action of defense mechanisms: constant-rate padding increases metric deviation \(C\) to blur the mapping from protocol layer to network layer, at the cost of bandwidth overhead; differential privacy mechanisms reduce observation fidelity by adding noise, at the cost of utility loss; while complete obfuscation defenses attempt to reduce semantic distinguishability \(\bar\Delta\), but will destroy application functionality. The essence of these mechanisms is finding different tradeoff points in the constraint space formed by the trilemma.

\textbf{Wrong objective: pursuing \(I(X;Y)=0\).}
Many defense schemes (such as Tor's traffic obfuscation, VPN's constant-rate padding, etc.) implicitly aim for ``eliminating side channels''. The theorem shows this objective is unattainable while maintaining system usability. Phenomena observed in practice verify this conclusion:

(1) Strong defenses like Tamaraw increase latency by 78\% and bandwidth overhead to 135\% in real Tor network deployment\cite{meng2024palette}; although they can significantly reduce attack accuracy, their high overhead limits practical deployment.

(2) Constant-rate padding strategies like BuFLO and CS-BuFLO can reduce identification accuracy but require over 100\% bandwidth overhead\cite{dyer2012peek,cai2014csbuflo}, making them impractical for actual deployment.

(3) Obfuscation defenses like WTF-PAD and FRONT have lower overhead but cannot resist latest deep learning attacks (accuracy can reach over 90\%)\cite{meng2024palette}.

\textbf{Correct objective: constrained optimization.}
The correct engineering objective revealed by the theorem is: minimize leakage under given efficiency constraints and functional requirements. Formalized as a constrained optimization problem:
\begin{equation}
\min_{\theta\in\Theta} I(X;Y;\theta) \quad\text{s.t.}\quad 
\begin{cases}
\text{Bandwidth overhead} \le \beta_{\max} & \text{(e.g., 10\%)}\\
\text{Latency increase} \le \Delta t_{\max} & \text{(e.g., 50ms)}\\
\text{Application functionality complete} & \text{(\(\bar\Delta\ge\Delta_{\min}\))}
\end{cases}
\end{equation}
where \(\theta\) is a joint vector of defense parameters such as padding strategies, timing perturbation, and cover traffic.

\section{Conclusion}

This paper provides a formal model \(\Sigma=(\Gamma,\Omega)\) for side-channel analysis from information theory and system design, abstracting the entire process of ``generation, encapsulation, encryption, transmission, observation'' as a causally measurable Markov chain \(X\to\Xi_A\to\Xi_P\to\Xi_C\to\Xi_N\to Y\). Based on this framework, this paper proves the side-channel existence theorem (Theorem \ref{thm:binary-leakage}): for distinguishable binary semantic pairs, under the conditions of mapping non-degeneracy (\(\mathbb{E}[d(z_P,z_N)\mid X]\le C\)), semantic distinguishability (expectation difference \(\ge\bar\Delta\)), Lipschitz robustness (\(|\varphi(z)-\varphi(z')|\le L_\varphi d(z,z')\)), non-degenerate observation (preservation ratio \(\rho>0\)), and the distinguishability propagation condition (\(C<\bar\Delta/2L_\varphi\)), the mutual information between observed features and semantic variables satisfies the explicit lower bound \(I(X;Y)\ge\frac{1}{2\ln 2}\left(\frac{\rho[\bar\Delta-2L_\varphi C]}{2}\right)^2>0\). Corollary \ref{cor:multiclass-leakage} further shows: in efficiency-prioritized multi-semantic systems, as long as at least one pair of applications is statistically distinguishable (stemming from inherent differences in application logic), side-channel leakage is inevitable. Through the precise relationship between total variation and accuracy and the Bhattacharyya-Chernoff lower bound chain, this paper establishes a quantified connection from information-theoretic lower bounds to actual attack performance, revealing the inevitability that multiple observations exponentially accumulate under the conditional independence assumption, making identification accuracy tend toward perfection.

This paper's analysis shows that reducing leakage faces a trilemma: increasing metric deviation \(C\) requires sacrificing efficiency, decreasing semantic distinguishability \(\bar\Delta\) will destroy application functionality, while observation non-degeneracy \(\rho\) is controlled by analysts rather than system designers. Therefore, the correct engineering objective is not pursuing unattainable zero leakage, but rather a constrained optimization problem that minimizes leakage under given efficiency constraints and functional requirements. It is worth emphasizing that the existence theorem and explicit mutual information lower bounds provided in this paper can serve as a theoretical baseline for protocol evolution and defense evaluation: under given efficiency constraints such as bandwidth, latency, and compatibility, different protocol configurations or defense strategies can be mapped to changes in metric \(d\) and non-degeneracy constant \(C(T)\), thereby quantifying their impact on leakage lower bounds and conducting goal-oriented constrained optimization and scheme comparison.

Particularly in website fingerprinting scenarios, this paper's theoretical framework provides direct guidance for attack feature selection. According to the structure of the mutual information lower bound, attackers should prioritize selecting macroscopic aggregate features with large expectation differences \(\bar\Delta\) (such as session total bytes, upstream/downstream packet ratios, burst payloads, etc.), while considering Lipschitz robustness of statistics to resist network perturbations; they should value distinguishability information in timing structures, such as packet interval distributions and burst-silence alternation patterns; they can also exploit the Chernoff exponential effect of multi-session observations to achieve exponential-level improvement in identification accuracy. These insights show that the expectation difference—Lipschitz robustness—mapping non-degeneracy analysis chain established in this paper provides an interpretable theoretical basis and operational feature design guidelines for traffic feature side-channel attacks.

Several open problems remain in the practical implementation of this paper's theoretical framework. First, the lower bounds provided by the theorem depend on the choice of metric \(d\) and estimation of constants \(C\), \(L_\varphi\); how to identify or verify these parameters from measured traffic data, and how to establish parameter libraries for different protocol families (TLS 1.3, QUIC) and business types (video streaming, web browsing), are key steps for theory landing. Second, this paper's non-degeneracy conditions (Definitions \ref{def:non-degeneracy}, \ref{def:nondegenerate}) are based on metric bounds in the expectation sense, but real networks have transient perturbations such as congestion bursts and routing jitter; how to reformulate conditions in a probabilistic sense (such as high-probability bounds or quantile constraints) and derive corresponding leakage lower bounds will enhance conclusion robustness. Third, this paper focuses on passive observation scenarios, but active probing (such as induced visits in website fingerprinting attacks) and adaptive attacks (observers adjusting strategies based on intermediate results) may break static lower bounds; how to characterize Nash equilibria and optimal strategies of attack-defense parties in a game-theoretic framework is an important topic in dynamic adversarial environments. Additionally, the precise relationship between this paper's information-theoretic mutual information measure and differential privacy's \((\varepsilon,\delta)\)-DP guarantees has not yet been established; whether one can prove ``mechanisms satisfying \(\varepsilon\)-DP necessarily lead to mutual information dropping below \(f(\varepsilon)\)'' or similar equivalence theorems will provide operational formal guidelines for privacy mechanism design. Finally, hierarchical leakage analysis in multi-task scenarios (such as first identifying application categories then subdividing specific websites), and dynamic leakage accumulation models considering temporal correlations and long-term observations, are all directions worthy of in-depth research. Solving these problems will advance this paper's existence conclusions into a computable, verifiable, and optimizable engineering practice framework.


\begin{thebibliography}{99}

\bibitem{wang2014effective}
WANG T, GOLDBERG I. Effective attacks and provable defenses for website fingerprinting[C]//Proceedings of the 23rd USENIX Security Symposium. Berkeley: USENIX Association, 2014: 143-157.

\bibitem{shen2021graphdapp}
SHEN M, ZHANG J, ZHU L, et al. Accurate decentralized application identification via encrypted traffic analysis using graph neural networks[J]. IEEE Transactions on Information Forensics and Security, 2021, 16: 2367-2380.

\bibitem{mei2025high}
MEI H, CHENG G, YUAN Y. High precision and efficient anonymous traffic classification in the real-world[J]. IEEE Transactions on Networking, 2025, 33(2): 1256-1270.

\bibitem{kocher1996timing}
KOCHER P C. Timing attacks on implementations of Diffie-Hellman, RSA, DSS, and other systems[C]//Advances in Cryptology - CRYPTO '96: 16th Annual International Cryptology Conference. Berlin: Springer-Verlag, 1996: 104-113.

\bibitem{kocher1999differential}
KOCHER P, JAFFE J, JUN B. Differential power analysis[C]//Advances in Cryptology - CRYPTO '99: 19th Annual International Cryptology Conference. Berlin: Springer-Verlag, 1999: 388-397.

\bibitem{hintz2002fingerprinting}
HINTZ A. Fingerprinting websites using traffic analysis[C]//Privacy Enhancing Technologies: Second International Workshop, PET 2002. Berlin: Springer-Verlag, 2003: 171-178.

\bibitem{lin2022etbert}
LIN X, XIONG G, GOU G, et al. ET-BERT: a contextualized datagram representation with pre-training transformers for encrypted traffic classification[C]//Proceedings of the ACM Web Conference 2022. New York: ACM, 2022: 633-642.

\bibitem{shen2023machine}
SHEN M, YE K, LIU X, et al. Machine learning-powered encrypted network traffic analysis: a comprehensive survey[J]. IEEE Communications Surveys \& Tutorials, 2023, 25(1): 791-824.

\bibitem{li2018measuring}
LI S, GUO H, HOPPER N. Measuring information leakage in website fingerprinting attacks and defenses[C]//Proceedings of the 2018 ACM SIGSAC Conference on Computer and Communications Security. New York: ACM, 2018: 1977-1992.

\bibitem{dwork2006distributed}
DWORK C, KENTHAPADI K, MCSHERRY F, et al. Our data, ourselves: privacy via distributed noise generation[C]//Advances in Cryptology - EUROCRYPT 2006: 25th Annual International Conference on the Theory and Applications of Cryptographic Techniques. Berlin: Springer-Verlag, 2006: 486-503.

\bibitem{sabzi2024netshaper}
SABZI A, VORA R, GOSWAMI S, et al. NetShaper: a differentially private network side-channel mitigation system[C]//Proceedings of the 33rd USENIX Security Symposium. Berkeley: USENIX Association, 2024: 3385-3402.

\bibitem{chaum1981untraceable}
CHAUM D L. Untraceable electronic mail, return addresses, and digital pseudonyms[J]. Communications of the ACM, 1981, 24(2): 84-90.

\bibitem{diaz2002measuring}
DÍAZ C, SEYS S, CLAESSENS J, et al. Towards measuring anonymity[C]//Privacy Enhancing Technologies: Second International Workshop, PET 2002. Berlin: Springer-Verlag, 2003: 54-68.

\bibitem{deng2006measuring}
DENG Y, PANG J, WU P. Measuring anonymity with relative entropy[C]//Formal Aspects in Security and Trust: 4th International Workshop, FAST 2006. Berlin: Springer-Verlag, 2007: 65-79.

\bibitem{serjantov2002information}
SERJANTOV A, DANEZIS G. Towards an information theoretic metric for anonymity[C]//Privacy Enhancing Technologies: Second International Workshop, PET 2002. Berlin: Springer-Verlag, 2003: 41-53.

\bibitem{chatzikokolakis2007anonymity}
CHATZIKOKOLAKIS K, PALAMIDESSI C, PANANGADEN P. Anonymity protocols as noisy channels[J]. Information and Computation, 2008, 206(2-4): 378-401.

\bibitem{kesdogan2002limits}
KESDOGAN D, AGRAWAL D, PENZ S. Limits of anonymity in open environments[C]//Information Hiding: 5th International Workshop, IH 2002. Berlin: Springer-Verlag, 2003: 53-69.

\bibitem{danezis2003statistical}
DANEZIS G. Statistical disclosure attacks: traffic confirmation in open environments[C]//Security and Privacy in the Age of Uncertainty: IFIP TC11 18th International Conference on Information Security. Boston: Springer, 2003: 421-426.

\bibitem{cover2006elements}
COVER T M, THOMAS J A. Elements of information theory[M]. 2nd ed. Hoboken: John Wiley \& Sons, 2006: 34-35.

\bibitem{vandenhooff2015vuvuzela}
VAN DEN HOOFF J, LAZAR D, ZAHARIA M, et al. Vuvuzela: scalable private messaging resistant to traffic analysis[C]//Proceedings of the 25th Symposium on Operating Systems Principles. New York: ACM, 2015: 137-152.

\bibitem{tyagi2017stadium}
TYAGI N, GILAD Y, LEUNG D, et al. Stadium: a distributed metadata-private messaging system[C]//Proceedings of the 26th Symposium on Operating Systems Principles. New York: ACM, 2017: 423-440.

\bibitem{panchenko2011website}
PANCHENKO A, NIESSEN L, ZINNEN A, et al. Website fingerprinting in onion routing based anonymization networks[C]//Proceedings of the 10th Annual ACM Workshop on Privacy in the Electronic Society. New York: ACM, 2011: 103-114.

\bibitem{dyer2012peek} 
DYER K P, COULL S E, RISTENPART T, et al. Peek-a-boo, I still see you: Why efficient traffic analysis countermeasures fail[C]//Proceedings of the 2012 IEEE Symposium on Security and Privacy. Los Alamitos: IEEE Computer Society, 2012: 332-346.

\bibitem{cai2014systematic} 
CAI X, NITHYANAND R, WANG T, et al. A systematic approach to developing and evaluating website fingerprinting defenses[C]//Proceedings of the 2014 ACM SIGSAC Conference on Computer and Communications Security. New York: ACM, 2014: 227-238.

\bibitem{wang2017walkie} 
WANG T, GOLDBERG I. Walkie-talkie: An efficient defense against passive website fingerprinting attacks[C]//Proceedings of the 26th USENIX Security Symposium. Berkeley: USENIX Association, 2017: 1375-1390.

\bibitem{huang2025stap}
HUANG J N, LIU W, LIU G, et al. STAP: leveraging state-transition adversarial perturbations for asymmetric website fingerprinting defenses[J]. IEEE Transactions on Network and Service Management, 2025, 22(1): 234-248.

\bibitem{wright2009traffic}
WRIGHT C V, COULL S E, MONROSE F. Traffic morphing: an efficient defense against statistical traffic analysis[C]//Proceedings of the 16th Network and Distributed Security Symposium. Reston: The Internet Society, 2009: 237-250.

\bibitem{cherubin2017bayes}
CHERUBIN G. Bayes, not naïve: security bounds on website fingerprinting defenses[C]//Proceedings on Privacy Enhancing Technologies, 2017, 2017(4): 215-231.

\bibitem{fu2021realtime}
FU C, LI Q, SHEN M, et al. Realtime robust malicious traffic detection via frequency domain analysis[C]//Proceedings of the 2021 ACM SIGSAC Conference on Computer and Communications Security. New York: ACM, 2021: 3431-3446.

\bibitem{fu2024detecting}
FU C, LI Q, SHEN M, et al. Detecting tunneled flooding traffic via deep semantic analysis of packet length patterns[C]//Proceedings of the 2024 on ACM SIGSAC Conference on Computer and Communications Security. New York: ACM, 2024: 3659-3673.

\bibitem{fu2023detecting}
FU C P, LI Q, XU K. Detecting unknown encrypted malicious traffic in real time via flow interaction graph analysis[C]//Network and Distributed System Security Symposium (NDSS). San Diego, CA, USA: The Internet Society, 2023.

\bibitem{camenisch2005formal}
CAMENISCH J, LYSYANSKAYA A. A formal treatment of onion routing[C]//Advances in Cryptology - CRYPTO 2005: 25th Annual International Cryptology Conference. Berlin: Springer-Verlag, 2005: 169-187.

\bibitem{feigenbaum2007model}
FEIGENBAUM J, JOHNSON A, SYVERSON P. A model of onion routing with provable anonymity[C]//Financial Cryptography and Data Security: 11th International Conference, FC 2007. Berlin: Springer-Verlag, 2007: 57-71.

\bibitem{danezis2009sphinx}
DANEZIS G, GOLDBERG I. Sphinx: a compact and provably secure mix format[C]//Proceedings of the 30th IEEE Symposium on Security and Privacy. Los Alamitos: IEEE Computer Society, 2009: 269-282.

\bibitem{gray2011probability}
GRAY R M. Probability, random processes, and ergodic properties[M]. 2nd ed. New York: Springer, 2011.

\bibitem{meng2024palette}
SHEN M, XU K, LI Q, et al. Real-time website fingerprinting defense via traffic cluster anonymization[C]//Proceedings of the 2024 IEEE Symposium on Security and Privacy. Los Alamitos: IEEE Computer Society, 2024: 2674-2691.

\bibitem{cai2014csbuflo}
CAI X, NITHYANAND R, JOHNSON R. CS-BuFLO: a congestion sensitive website fingerprinting defense[C]//Proceedings of the 2014 ACM Workshop on Privacy in the Electronic Society. New York: ACM, 2014: 121-130.

\end{thebibliography}
\end{document}